\providecommand{\paragraph}{}
\renewcommand{\paragraph}{%
  \@startsection{paragraph}{4}{\z@}%
                {1.5ex \@plus 0.5ex \@minus 0.2ex}%
                {-1em}%
                {\normalsize\bf}%
}
\renewcommand{\footnoterule}{\kern-3\p@ \hrule width 12pc \kern 2.6\p@}
\theoremstyle{plain}
\newtheorem{proposition}{Proposition}
\theoremstyle{definition}
\definecolor{linkcolor}{RGB}{83,83,182}
\crefname{algocf}{alg.}{algs.}
\Crefname{algocf}{Algorithm}{Algorithms}
\def\equationautorefname~#1\null{(#1)\null}
\newcommand{\subfigureautorefname}{\figureautorefname}
\newcommand{\tT}{\widetilde{T}}
\newcommand{\tran}[1]{#1^{\pmb \Lsh}}
\newcommand{\pconv}{~\tilde{*}~}
\def\subfilebiblio{
	\bibliographystyle{unsrtnat}
	\bibliography{../biblio}
	\def\subfilebiblio{}
}
\newcites{supp}{References}
\title{Multivariate Convolutional Sparse Coding for Electromagnetic Brain Signals}
\author{
Tom Dupré La Tour$^1$\footnote{Corresponding author: Tom Dupré La Tour -- \href{mailto:tom.duprelatour@telecom-paristech.fr}{tom.duprelatour@telecom-paristech.fr}}~\footnote{These authors have equally contributed to this work.}~, Thomas Moreau$^2$\footnotemark[2]~,
Mainak Jas$^1$, Alexandre Gramfort$^2$\\
$~^1$LTCI, Télécom ParisTech, Université Paris-Saclay, Paris, France\\
$~^2$INRIA, Université Paris Saclay, Saclay, France
}
\begin{document}

\maketitle

\begin{abstract}

Frequency-specific patterns of neural activity are traditionally interpreted as sustained rhythmic oscillations, and related to cognitive mechanisms such as attention, high level visual processing or motor control.
While alpha waves (8--12\,Hz) are known to closely resemble short sinusoids, and thus are revealed by Fourier analysis or wavelet transforms, there is an evolving debate that electromagnetic neural signals are composed of more complex waveforms that cannot be analyzed by linear filters and traditional signal representations.
In this paper, we propose to learn dedicated representations of such recordings using a multivariate convolutional sparse coding (CSC) algorithm. Applied to electroencephalography (EEG) or magnetoencephalography (MEG) data,
this method is able to learn not only prototypical temporal waveforms, but also associated spatial patterns so their origin can be localized in the brain.
Our algorithm is based on alternated minimization and a greedy coordinate descent solver that leads to state-of-the-art running time on long time series.
To demonstrate the implications of this method, we apply it to MEG data and show that it is able to recover biological artifacts.
More remarkably, our approach also reveals the presence of non-sinusoidal mu-shaped patterns, along with their topographic maps related to the somatosensory cortex.

\end{abstract}

\section{Introduction} 
\label{sec:introduction}

Neural activity recorded via measurements of the electrical potential over the scalp by electroencephalography (EEG), or magnetic fields by magnetoencephalography (MEG), is central for our understanding of human cognitive processes and certain pathologies. Such recordings consist of dozens to hundreds of simultaneously recorded signals, for duration going from minutes to hours. In order to describe and quantify neural activity in such multi-gigabyte data, it is classical to decompose the signal in predefined representations such as the Fourier or wavelet bases. It leads to canonical frequency bands such as theta (4–8\,Hz), alpha (8–12\,Hz), or beta (15–30\,Hz)~\citep{buzsaki2006rhythms}, in which signal power can be quantified. While such linear analyses have had significant impact in neuroscience, there is now a debate regarding whether neural activity consists more of transient bursts of isolated events rather than rhythmically sustained oscillations~\citep{van2018neural}. To study the transient events and the morphology of the waveforms~\citep{mazaheri2008asymmetric, cole2017brain}, which matter in cognition and for our understanding of pathologies~\citep{jones2016brain,Cole4830}, there is a clear need to go beyond traditionally employed signal processing methodologies~\citep{cole2018cycle}. 
For instance, a classic Fourier analysis fails to distinguish alpha-rhythms from mu-rhythms, which have the same peak frequency at around 10\,Hz, but whose waveforms are different~\citep{cole2017brain,hari2017meg}.

The key to many modern statistical analyses of complex data such as natural images, sounds or neural time series is the estimation of data-driven representations. Dictionary learning is one family of techniques, which consists in learning atoms (or patterns) that offer sparse data approximations. When working with long signals in which events can happen at any instant, one idea is to learn \emph{shift-invariant} atoms. They can offer better signal approximations than generic bases such as Fourier or wavelets, since they are not limited to narrow frequency bands.

Multiple approaches have been proposed to solve this shift-invariant dictionary learning problem, such as MoTIF~\cite{jost2006motif}, the sliding window matching~\cite{gips2017discovering}, the adaptive waveform learning~\cite{hitziger2017adaptive}, or the learning of recurrent waveform~\cite{brockmeier2016learning}, yet they all have several limitations, as discussed in~\citet{Jas2017}.
A more popular approach, especially in image processing, is the convolutional sparse coding (CSC) model~\cite{Jas2017, pachitariu2013extracting, kavukcuoglu2010learning, zeiler2010deconvolutional, heide2015fast, wohlberg2016efficient, vsorel2016fast, Grosse-etal:2007, mailhe2008shift}. The idea is to cast the problem as an optimization problem, representing the signal as a sum of convolutions between atoms and activation signals.

The CSC approach has been quite successful in several fields such as computer vision~\cite{kavukcuoglu2010learning, zeiler2010deconvolutional, heide2015fast, wohlberg2016efficient, vsorel2016fast}, biomedical imaging~\cite{Jas2017, pachitariu2013extracting}, and audio signal processing~\cite{Grosse-etal:2007, mailhe2008shift}, yet it was essentially developed for univariate signals.

Interestingly, images can be multivariate such as color or hyper-spectral images, yet most CSC methods only consider gray scale images.
To the best of our knowledge, the only reference to multivariate CSC is \citet{wohlberg2016convolutional},
where the author proposes two models well suited for 3-channel images.
In the case of EEG and MEG recordings, neural activity is instantaneously and linearly spread across channels, due to Maxwell's equations \cite{hari2017meg}. The same temporal patterns are reproduced on all channels with different intensities, which depend on each activity's location in the brain.
To exploit this property, we propose to use a rank-1 constraint on each multivariate atom. This idea has been mentioned in \cite{barthelemy2012shift, barthelemy2013multivariate}, but was considered less flexible than the full-rank model. Moreover, their proposed optimization techniques are not specific to shift-invariant models, and not scalable to long signals.

\paragraph{Contribution}
\label{par:contrib}

In this study, we develop a multivariate model for CSC, using a rank-1 constraint on the atoms to account for the instantaneous spreading of an electromagnetic source over all the channels. 
We also propose efficient optimization strategies, namely a locally greedy coordinate descent (LGCD)~\cite{Moreau2018a}, and precomputation steps for faster gradient computations.
We provide multiple numerical evaluations of our method, which show the highly competitive running time on both univariate and multivariate models, even when working with hundreds of channels. We also demonstrate the estimation performance of the multivariate model by recovering patterns on low signal-to-noise ratio (SNR) data. Finally, we illustrate our method with atoms learned on multivariate MEG data, that thanks to the rank-1 model can be localized in the brain for clinical or cognitive neuroscience studies.

\paragraph{Notation} A multivariate signal with $T$ time points in $\bbR^P$ is noted $X \in \bbR^{P\times T}$,
while $x \in \bbR^T$ is a univariate signal.

We index time with brackets $X[t] \in \bbR^p$, while $X_i \in \bbR^T$ is the channel $i$ in $X$.

For a vector $v \in \bbR^P$ we define the $\ell_q$ norm as $\|v\|_q = \left(\sum_i |v_i|^q \right)^{1/q}$, and for a multivariate signal $X \in \bbR^{P \times T}$, we define the time-wise $\ell_q$ norm as $\|X\|_q = (\sum_{t=1}^T \|X[t]\|_q^q )^{1/q}$.

The transpose of a matrix $U$ is denoted by $U^\top$.

For a multivariate signal $X \in \bbR^{P\times T}$, $\tran{X}$ is obtained by reversal of the temporal dimension, \ie $\tran{X}[t] = X[T + 1 - t]$.

The convolution of two signals $z \in \bbR^{T - L + 1}$ and $d \in \bbR^{L}$ is denoted by \mbox{$z \ast d \in \bbR^T$}.
For $D \in \bbR^{P \times L}$, $z * D$ is obtained by convolving every row of $D$ by $z$.
For $ D' \in \bbR^{P\times L}$, $D \pconv D' \in \bbR^{2L-1}$ is obtained by summing the convolution between each row of $D$ and $D'$: $D \pconv D' = \sum_{p=1}^{P}  D_p * D'_p$ .
We define $\tT$ as $T-L+1$.

\section{Multivariate Convolutional Sparse Coding} 
\label{sec:method}

In this section, we introduce the convolutional sparse coding (CSC) models used in this work. We focus on 1D-convolution, although these models can be naturally extended to higher order signals such as images by using the proper convolution operators.

\paragraph{Univariate CSC} 
The CSC formulation adopted in this work follows the shift-invariant sparse coding (SISC) model from~\citet{Grosse-etal:2007}. It is defined as follows:

\begin{equation}
\label{eq:vanilla_csc}
\begin{split}
	\min_{d_k, z_k^n} &\sum_{n=1}^N\frac{1}{2}\left\|x^{n} - \sum_{k=1}^K z^{n}_k * d_k\right\|_{2}^{2}
		+ \lambda \sum_{k=1}^K \|z^n_k\|_1~, \\
	& \text{s.t. } ~~ \|d_{k}\|_2^2 \leq 1 \text{  and } z_k^n \geq 0~,
\end{split}
\end{equation}

where $\{x^n\}_{n=1}^N \subset \bbR^{T}$ are $N$ observed signals, $\lambda > 0$ is the regularization parameter, $\{d_k\}_{k=1}^K \subset \bbR^L$ are the $K$ temporal atoms we aim to learn, and $\{z_k^n\}_{k=1}^K \subset \bbR^{\tT}$ are $K$  signals of activations aka the code associated with $x^n$.
This model assumes that the coding signals $z_k^n$ are sparse, in the sense that only few entries are nonzero in each signal.
In this work, we will also assume that the entries of $z_k^n$ are positive, which means that the temporal patterns are present each time with the same polarity. 

\paragraph{Multivariate CSC}
The multivariate formulation uses an additional dimension on the signals and on the atoms, since the signal is recorded over $P$ channels (mapping to space locations):

\begin{equation}
\label{eq:multivariate_csc} 
\begin{split}
	\min_{ D_k, z_k^n} &\sum_{n=1}^N\frac{1}{2}\left\|X^n - \sum_{k=1}^K z^n_k *  D_k\right\|_{2}^{2}
 		+ \lambda  \sum_{k=1}^K \|z^n_k\|_1,\\
	&\text{s.t. } ~~ \| D_{k}\|_2^2 \leq 1 \text{  and } z_k^n \geq 0~,
\end{split}
\end{equation}

where $\{X^n\}_{n=1}^N \subset \bbR^{P \times T}$ are $N$ observed multivariate signals, $\{D_k\}_{k=1}^K \subset \bbR^{P\times L}$ are the spatio-temporal atoms, and $\{z_k^n\}_{k=1}^K \subset \bbR^{\tT}$ are the sparse activations associated with $X^n$.

\paragraph{Multivariate CSC with rank-1 constraint}
This model is similar to the multivariate case but it adds a rank-1 constraint on the dictionary, $ D_k = u_k^{ } v_k^\top \in \bbR^{P \times L}$, with $u_k \in \bbR^{P}$ being the pattern over channels and $v_k \in \bbR^L$ the pattern over time. The optimization problem boils down to:

\begin{equation}
\label{eq:multichannel_csc}
\begin{split}
 \min_{u_k, v_k, z_k^n} &\sum_{n=1}^N\frac{1}{2}\left\|X^n - \sum_{k=1}^K z^n_k * (u_k^{ } v_k^\top)\right\|_{2}^{2}
 	+ \lambda  \sum_{k=1}^K \left\|z^n_k\right\|_1, \hspace{6pt}\\
 &\text{s.t. } ~~ \|u_{k}\|_2^2 \leq 1 \text{  , }\|v_{k}\|_2^2 \leq 1 \text{  and } z_k^n \geq 0~.
\end{split}
\end{equation}

The rank-1 constraint is consistent with Maxwell's equations and the physical model of electrophysiological signals like EEG or MEG, where each source is linearly spread instantaneously over channels with a constant topographic map~\cite{hari2017meg}.
Using this assumption, one aims to improve the estimation of patterns under the presence of independent noise over channels.
Moreover, it can help separating overlapping sources which are inherently rank-1 but whose sum is generally of higher rank.
Finally, as explained below, several computations can be factorized to speed up computational time.

\section{Model estimation}

Problems \eqref{eq:vanilla_csc}, \eqref{eq:multivariate_csc} and \eqref{eq:multichannel_csc} share the same structure. They are convex in each variable but not jointly convex. The resolution is done by using a block coordinate descent approach which minimizes alternatingly the objective function over one block of the variables. In this section, we describe this approach on the multivariate with rank-1 constraint case~\eqref{eq:multichannel_csc}, updating iteratively the activations $z_k^n$, the spatial patterns $u_k$, and the temporal pattern $v_k$.

\subsection{$Z$-step: solving for the activations}

Given $K$ \emph{fixed} atoms $D_k$ and a regularization parameter $\lambda > 0$, the $Z$-step aims to retrieve the $NK$ activation signals $z_k^{n} \in \bbR^{\tT} $ associated to the signals $X^n \in \bbR^{P\times T}$ by solving the following $\ell_1$-regularized optimization problem:

\begin{align}
	\label{eq:sparse_code}
	\min_{z_k^n \ge 0} \frac{1}{2} \left\|X^n - \sum_{k=1}^K z_k^n * D_k\right\|_2^2
					+ \lambda\sum_{k=1}^K\left\|z_k^n\right\|_1~.
\end{align}

This problem is convex in $z_k^n$ and can be efficiently solved. In~\citet{Chalasani2013}, the authors proposed an algorithm based on FISTA~\cite{beck2009fast} to solve it.~\citet{bristow2013fast} introduced a method based on ADMM~\cite{boyd2011distributed} to compute efficiently the activation signals $z_k^n$. These two methods are detailed and compared by~\citet{wohlberg2016efficient}, which also made use of the fast Fourier transform (FFT) to accelerate the computations. Recently,~\citet{Jas2017} proposed to use L-BFGS~\cite{byrd1995limited} to improve on first order methods. Finally,~\citet{kavukcuoglu2010learning} adapted the greedy coordinate descent (GCD) to solve this convolutional sparse coding problem.

However, for long signals, these techniques can be quite slow due the computation of the gradient (FISTA, ADMM, L-BFGS) or the choice of the best coordinate to update in GCD, which are operations that scale linearly in $T$. A way to alleviate this limitation is to use a locally greedy coordinate descent (LGCD) strategy, presented recently in \citet{Moreau2018a}.

Note that problem \autoref{eq:sparse_code} is independent for each signal $X^n$. The computation of each ${z^n}$ can thus be parallelized, independently of the technique selected to solve the optimization~\citep{Jas2017}. Therefore, we omit the superscript $n$ in the following subsection to simplify the notation.

\paragraph{Coordinate descent (CD)}
The key idea of coordinate descent is to update our estimate of the solution one coordinate $z_{k}[t]$ at a time. For \autoref{eq:sparse_code}, it is possible to compute the optimal value $z'_{k}[t]$ of one coordinate $z_{k}[t]$ given that all the others are fixed. Indeed, the problem \autoref{eq:sparse_code} restricted to one coordinate has a closed-form solution given by:
\begin{equation}
	\label{eq:optimal_update}
	z'_{k}[t] =  \max\left(\frac{\beta_{k}[t] - \lambda}{\| D_{k}\|_2^2}, 0\right) , ~\text{with}~~~
	\beta_{k}[t] = \left[\tran{D_{k}} \pconv \left(X- \sum_{l=1}^K z_l * D_l +z_{k}[t]e_t* D_{k}\right)\right][t]
\end{equation}
where $e_t \in \bbR^{\tT}$ is the canonical basis vector with value 1 at index $t$ and 0 elsewhere. When updating the coefficient $z_{k_0}[t_0]$ to the value $z'_{k_0}[t_0]$, $\beta$ is updated with:
\begin{equation}\label{eq:beta_up}
	\beta_k^{(q+1)}[t] = \beta_k^{(q)}[t] +
		( \tran{D_{k_0}} \pconv D_k)[t-t_0] (z_{k_0}[t_0] - z'_{k_0}[t_0]), \hskip2em \forall (k, t) \neq (k_0, t_0)~.
\end{equation}
The term $( \tran{D_{k_0}} \pconv  D_k)[t-t_0]$ is zero for $|t-t_0| \ge L$. Thus, only $K(2L -1)$ coefficients of $\beta$ need to be changed~\citep{kavukcuoglu2010learning}. The CD algorithm updates at each iteration a coordinate to this optimal value. The coordinate to update can be chosen with different strategies, such as the cyclic strategy which iterates over all coordinates \citep{Friedman2007}, the randomized CD \citep{Nesterov2010, richtarik2014iteration} which chooses a coordinate at random for each iteration, or the greedy CD \citep{Osher2009} which chooses the coordinate the farthest from its optimal value.

\paragraph{Locally greedy coordinate descent (LGCD)} 
\label{par:lgcd}

The choice of a coordinate selection strategy results of a tradeoff between the computational cost of each iteration and the improvement it provides. For cyclic and randomized strategies, the iteration complexity is $\cO(KL)$ as the coordinate selection can be performed in constant time. The greedy selection of a coordinate is more expensive as it is linear in the signal length $\cO(K\tT)$. However, greedy selection is more efficient iteration-wise \citep{Nutini2015}.

\citet{Moreau2018a} proposed to consider a locally greedy selection strategy for CD. The coordinate to update is chosen greedily in one of $M$ subsegments of the signal, \ie at iteration $q$, the selected coordinate is:

\begin{equation}
	(k_0, t_0) = \argmax_{(k, t) \in \cC_m} | z_k[t] - z'_k[t]|~, ~~~~~~ m \equiv q~(\textrm{mod}~M) + 1~,
\end{equation}
with $\cC_m = \interval{1, K}\times\interval{(m-1)\tT/M, m\tT/M}$. 

With this strategy, the coordinate selection complexity is linear in the length of the considered subsegment $\cO(K\tT / M)$. By choosing $M = \lfloor \tT / (2L-1) \rfloor$, the complexity of update is the same as the complexity of random and cyclic coordinate selection, $\cO(KL)$.

We detail the steps of LGCD in \autoref{alg:LGCD}.
This algorithm is particularly efficient when the $z_k$ are sparser. Indeed, in this case, only few coefficients need to be updated in the signal, resulting in a low number of iterations. Computational complexities are detailed in \autoref{tab:complexity}.

\begin{algorithm}[tp]
	\SetAlgoLined
	\SetKwInOut{Input}{Input}
	\Input{Signal $X$, atoms $D_k$, number of segments $M$, stopping parameter $\epsilon >  0$, $z_k$ initialization}
	Initialize $\beta_k[t]$ with \autoref{eq:optimal_update}. \\ 
	\Repeat{$\|z - z'\|_\infty < \epsilon$ }{
		\For{$m = 1$ \KwTo $M$}{
			Compute $z'_{k}[t] =  \max\left(\frac{\beta_{k}[t]-\lambda}{\| D_{k}\|_2^2}, 0\right)$
			for $(k,t) \in \cC_m$ \\
			Choose $\displaystyle(k_0, t_0) = \argmax_{(k, t)\in\cC_m} |z_k[t] - z'_k[t]|$\\
			Update $\beta$ with \autoref{eq:beta_up}\\
			Update the current point estimate $z_{k_0}[t_0]\leftarrow{}z'_{k_0}[t_0]$\\

		}

	}
    \caption{Locally greedy coordinate descent (LGCD)}
	\label{alg:LGCD}
\end{algorithm}

\subsection{$D$-step: solving for the atoms}
\label{sec:d_step}

Given $KN$ \emph{fixed} activation signals $z_k^n\in \bbR^{\tT}$, associated to signals $X^n \in \bbR^{P \times T}$, the $D$-step aims to update the $K$ spatial patterns $u_k\in\bbR^{P}$ and $K$ temporal patterns $v_k\in \bbR^L$, by solving:

\begin{equation}
	\label{eq:dstep}
	\begin{split}
 \min_{\substack{\|u_{k}\|_2 \leq 1\\\|v_{k}\|_2 \leq 1}} E(\{u_k\}_k, \{v_k\}_k),
 	~\text{where}~~ E(\{u_k\}_k, \{v_k\}_k) \overset{\Delta}{=} \sum_{n=1}^N\frac{1}{2}\|X^n - \sum_{k=1}^K z^n_k * (u_k^{ }  v_k^\top) \|_{2}^{2} \hspace{6pt}
\enspace .
\end{split}
\end{equation}

The problem \autoref{eq:dstep} is convex in each block of variables $\{u_k\}$ and $\{v_k\}$, but not jointly convex. Therefore, we optimize first $\{u_k\}$, then $\{v_k\}$, using in both cases a projected gradient descent with an Armijo backtracking line-search~\cite{wright1999numerical} to find a good step size. These steps are detailed in \autoref{alg:supp:d_update}.

\paragraph{Gradient relative to $u_k$ and $v_k$}
The gradient of $E(\{u_k\}_k, \{v_k\}_k)$  relatively to $\{u_k\}$ and $\{v_k\}$ can be computed using the chain rule.
First, we compute the gradient relatively to a full atom $ D_k = u_k^{ } v_k^\top \in \bbR^{P \times L}$:

\begin{align}
\nabla_{D_{k}} E(\{u_k\}_k, \{v_k\}_k) =  \sum_{n=1}^N \tran{(z_k^n)} * \left(X^n - \sum_{l=1}^K z^n_l * D_l\right)
=  \Phi_k - \sum_{l=1}^K \Psi_{k, l} *  D_l \enspace ,
\end{align}

where we reordered this expression to define $\Phi_k \in \bbR^{P \times L}$ and $\Psi_{k,l} \in \bbR^{2L - 1}$. These terms are both constant during a $D$-step and can thus be precomputed to accelerate the computation of the gradients and the cost function $E$. We detail these computations in the supplementary materials (see \autoref{sub:supp:dstep}).

Computational complexities are detailed in Table~\ref{tab:complexity}. Note that the dependence in $T$ is present \emph{only} in the precomputations, which makes the following iterations very fast.
Without precomputations, the complexity of \emph{each} gradient computation in the $D$-step would be $\mathcal{O}(NKTLP)$.

\subsection{Initialization}

The activations sub-problem ($Z$-step) is regularized with a $\ell_1$-norm, which induces sparsity: the higher the regularization parameter $\lambda$, the higher the sparsity. Therefore, there exists a value $\lambda_{max}$ above which the sub-problem solution is always zeros~\citep{Hastie2015}.
As $\lambda_{max}$ depends on the atoms $D_k$ and on the signals $X^n$, its value changes after each $D$-step.
In particular, its value might change a lot between the initialization and the first $D$-step.
This is problematic since we cannot use a regularization $\lambda$ above this initial $\lambda_{max}$, even though the following $\lambda_{max}$ might be higher.

\begin{table}[t]
\begin{center}
\captionsetup{justification=centering}
\caption{Computational complexities of each step}
\label{tab:complexity}
\begin{tabular}{ccccc}  
\toprule
\textbf{Step} & \textbf{Computation}        & \textbf{Computed} & \textbf{Rank-1}   & \textbf{Full-rank} \\ \midrule
$Z$-step & $\beta$ initialization & once           & $NKT(L+P)$        & $NKT(LP)$  \\
$Z$-step & Precomputation         & once           & $K^2L(L+P)$       & $K^2L(LP)$ \\
$Z$-step & M coordinate updates   & multiple times & $MKL$             & $MKL$       \\ \midrule
$D$-step & $\Phi$ precomputation  & once           & $NKTLP$           & $NKTLP$    \\
$D$-step & $\Psi$ precomputation  & once           & $NK^2TL$          & $NKTLP$    \\
$D$-step & Gradient evaluation    & multiple times & $K^2L(L+P)$ & $K^2L(LP)$ \\
$D$-step & Function evaluation    & multiple times & $K^2L(L+P)$       & $K^2L(LP)$ \\ 
\bottomrule
\end{tabular}
\end{center}
\vspace{-18pt}
\end{table}

The standard strategy to initialize CSC methods is to generate random atoms with Gaussian white noise. However, as these atoms generally poorly correlate with the signals, the initial value of $\lambda_{max}$ is low compared to the following ones.

For example, on the MEG dataset described later on, we found that the initial $\lambda_{max}$ is about $1/3$ of the following ones in the univariate case, with $L=32$. On the multivariate case, it is even more problematic as with $P=204$, we could have an initial $\lambda_{max}$ as low as $1/20$ of the following ones. 

To fix this problem, we propose to initialize the dictionary with random chunks of the signal,
projecting each chunk on a rank-1 approximation using singular value decomposition (SVD).
We noticed on the MEG dataset that the initial $\lambda_{max}$ was then about the same values as the following ones, which allows to use higher regularization parameters.
We used this scheme in all our experiments.

\section{Experiments} 
\label{sec:experiments}

We evaluated our model on several experiments, using both synthetic and empirical data. First, using MEG data, we demonstrate the speed performance of our algorithm on both univariate and multivariate signals, compared with state-of-the-art CSC methods. We also show that the algorithm scales well with the number of channels. Then, we used synthetic data to show that the multivariate model with rank-1 constraint is more robust to low SNR signals than the univariate models. Finally, we illustrate how our unsupervised model is able to extract simultaneously prototypical waveforms and the corresponding topographic maps of a multivariate MEG signal.

\paragraph{Speed performance}
\label{par:fast_optim}

\begin{figure}
\begin{center}
\subfigure[$\lambda = 10$ (univariate).]{
\includegraphics[width=0.32\textwidth]{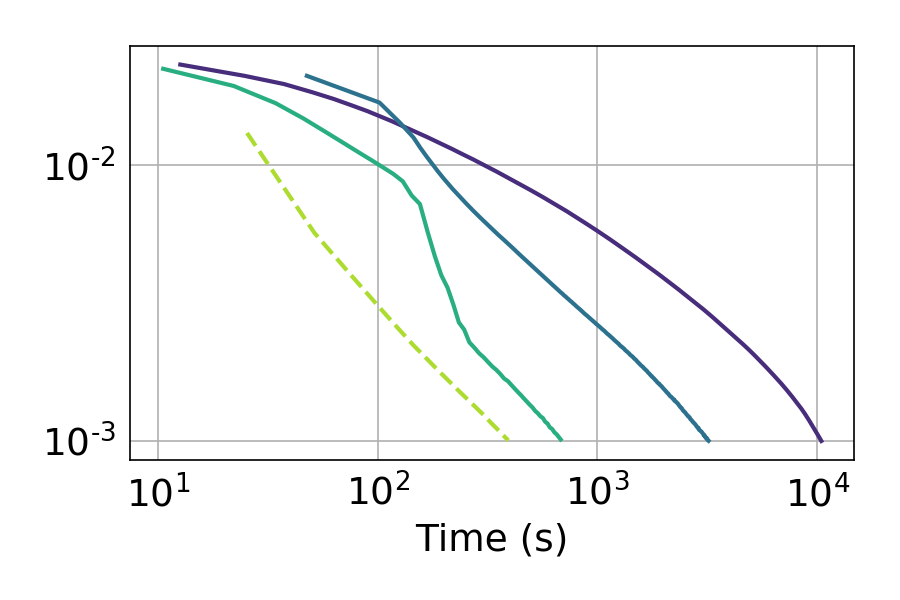}}%
\subfigure[Time to reach a precision of 0.001 (univariate).]{
\includegraphics[width=0.66\textwidth]{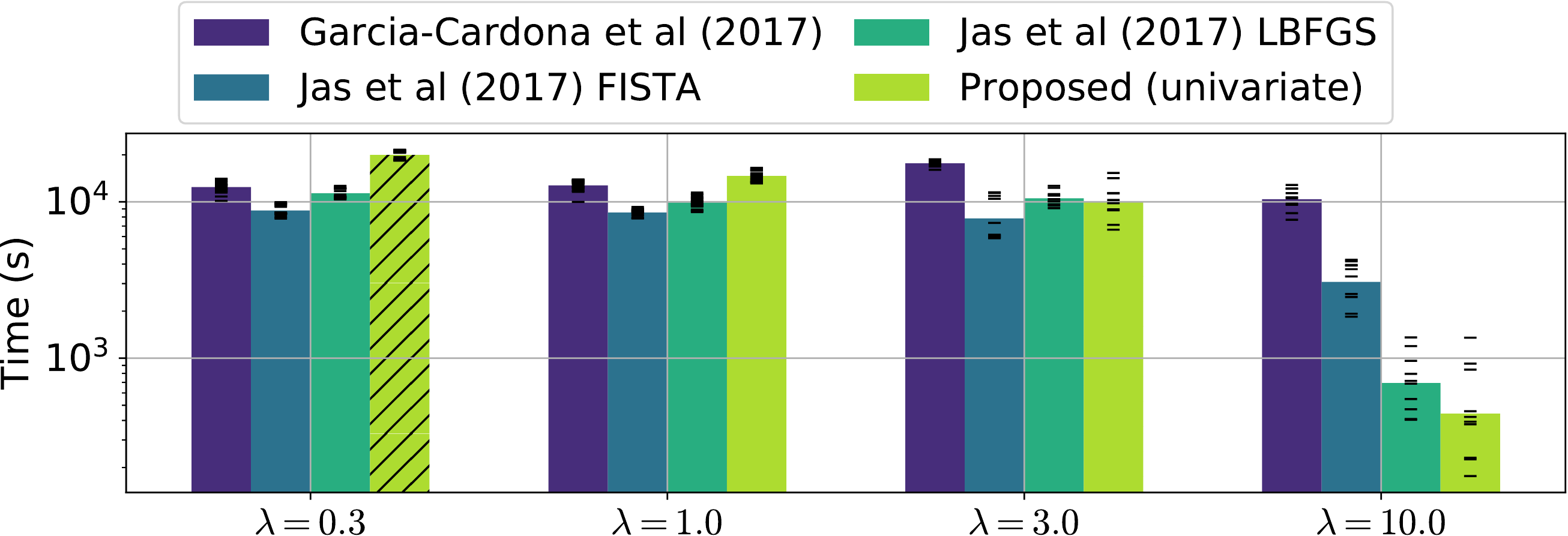}}
\subfigure[$\lambda = 10$ (multivariate).]{
\includegraphics[width=0.32\textwidth]{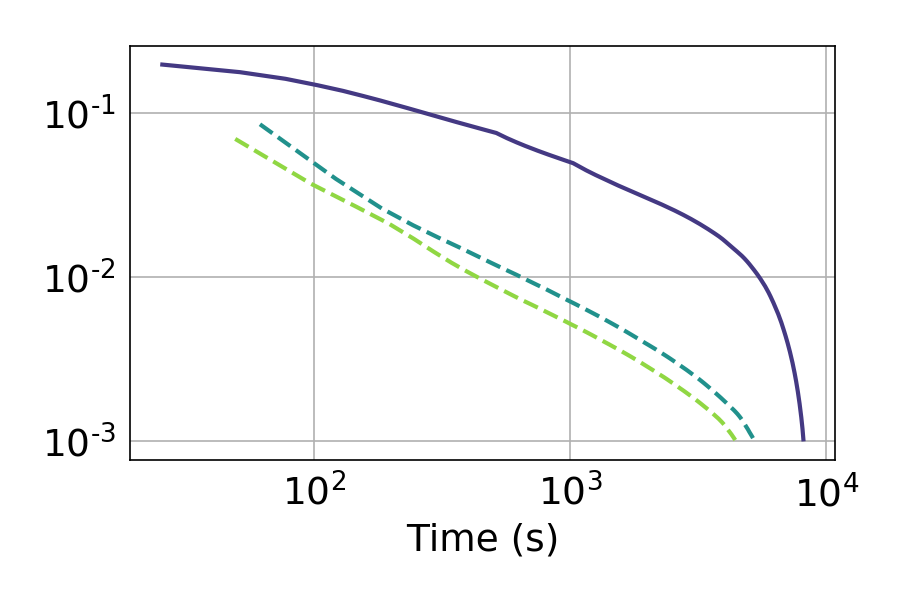}}%
\subfigure[Time to reach a precision of 0.001 (multivariate).]{
\includegraphics[width=0.66\textwidth]{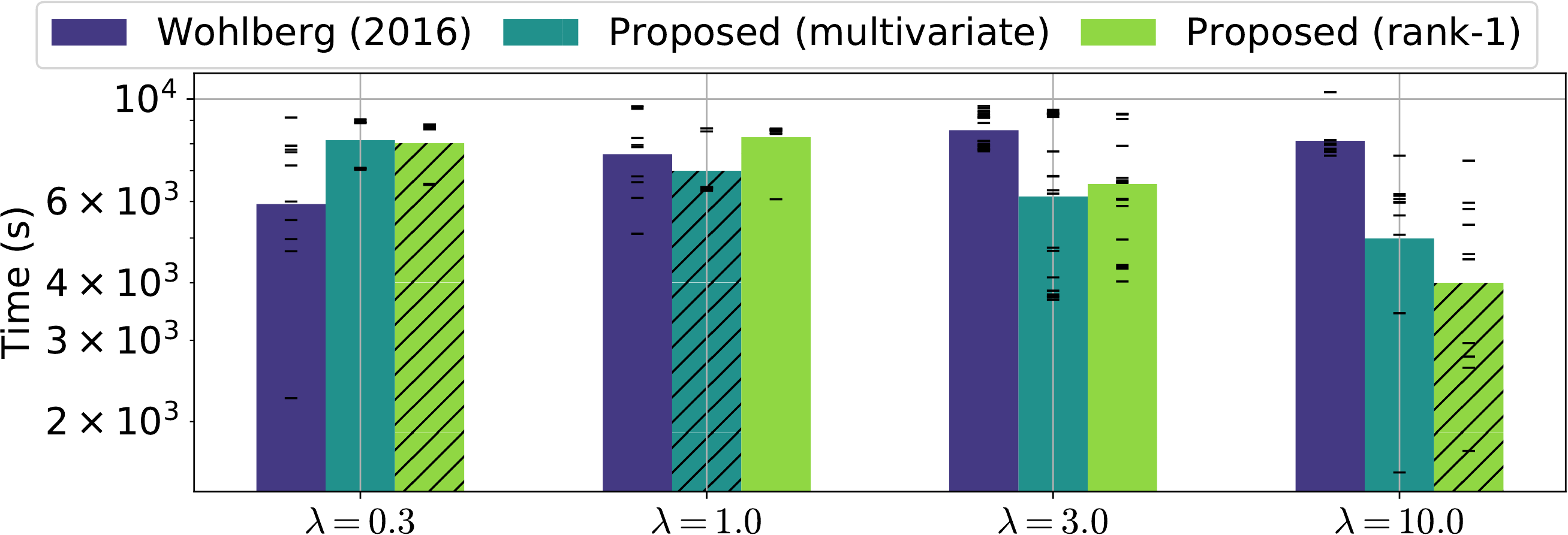}}
\end{center}
\caption{Comparison of state-of-the-art univariate (a, b) and multivariate (c, d) methods with our approach. (a)~Convergence plot with the objective function relative to the obtained minimum, as a function of computational time. (b)~Time taken to reach a relative precision of $10^{-3}$, for different regularization parameters $\lambda$.
(c, d) Same as (a, b) in the multivariate setting $P=5$.}
\label{fig:speed}
\vspace{-15pt}
\end{figure}

To illustrate the performance of our optimization strategy, we monitored its convergence speed on a real MEG dataset.
The somatosensory dataset from the MNE software~\cite{gramfort2013meg, gramfort2014mne} contains responses to median nerve stimulation.
We consider only the gradiometers channels \footnote{These channels measure the gradient of the magnetic field} and we used the following parameters: $T=134~700$, $N=2$, $K=8$, and $L=128$. 

First we compared our strategy against three state-of-the-art \emph{univariate} CSC solvers available online. The first was developed by \citet{garcia2017convolutional} and is based on ADMM. The second and third were developed by \citet{Jas2017}, and are respectively based on FISTA and L-BFGS.
All solvers shared the same objective function, but as the problem is non-convex, the solvers are not guaranteed to reach the same local minima, even though we started from the same initial settings. Hence, for a fair comparison, we computed the convergence curves relative to each local minimum, and averaged them over 10 different initializations. The results, presented in \autoref{fig:speed}(a, b), demonstrate the competitiveness of our method, for reasonable choices of $\lambda$. Indeed, a higher regularization parameter leads to sparser activations $z_k^n$, on which the LGCD algorithm is particularly efficient.

Then, we also compared our method against a multivariate ADMM solver  developed by \citet{wohlberg2016convolutional}. As this solver was quite slow on these long signals, we limited our experiments to $P=5$ channels. The results, presented in \autoref{fig:speed}(c, d), show that our method is faster than the competing method for large $\lambda$.
More benchmarks are available in the supplementary materials.

\paragraph{Scaling with the number of channels}
The multivariate model involves an extra dimension $P$ but its impact on the computational complexity of our solver is limited. \autoref{fig:scaling_channels} shows the average running times of the $Z$-step and the $D$-step. Timings are normalized \wrt the timings for a single channel. 
The running times are computed using the same signals from the somatosensory dataset, with the following parameters: $T=26~940$, $N=10$, $K=2$, $L=128$. We can see that the scaling of these three operations is sub-linear in $P$.

For the $Z$-step, only the initial computations for the first $\beta_k$ and the constants $\tran{D_k} \pconv D_l$ depend linearly on $P$ so that the complexity increase is limited compared to the complexity of solving the optimization problem \autoref{eq:sparse_code}.

For the $D$-step, the scaling to compute the gradients is linear with $P$. However, the most expensive operations here are the computation of the constant $\Psi_k$, which does not on $P$ .

\begin{figure}[tp]
\centering
\hskip-1em%
\subfigure[{$\lambda = .005\lambda_{\max}$}]{
	\includegraphics[width=.48\textwidth]{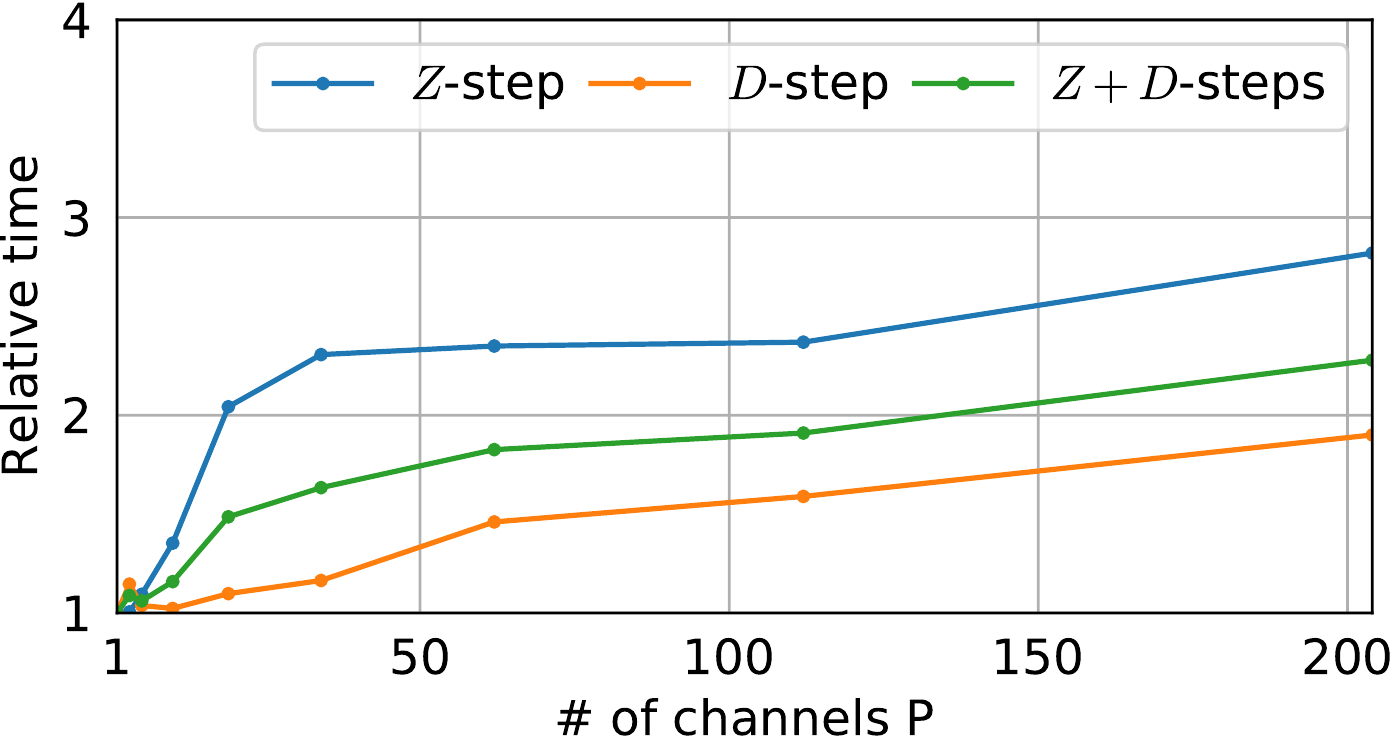}
}%
\subfigure[{$\lambda = .001\lambda_{\max}$}]{
	\includegraphics[width=.48\textwidth]{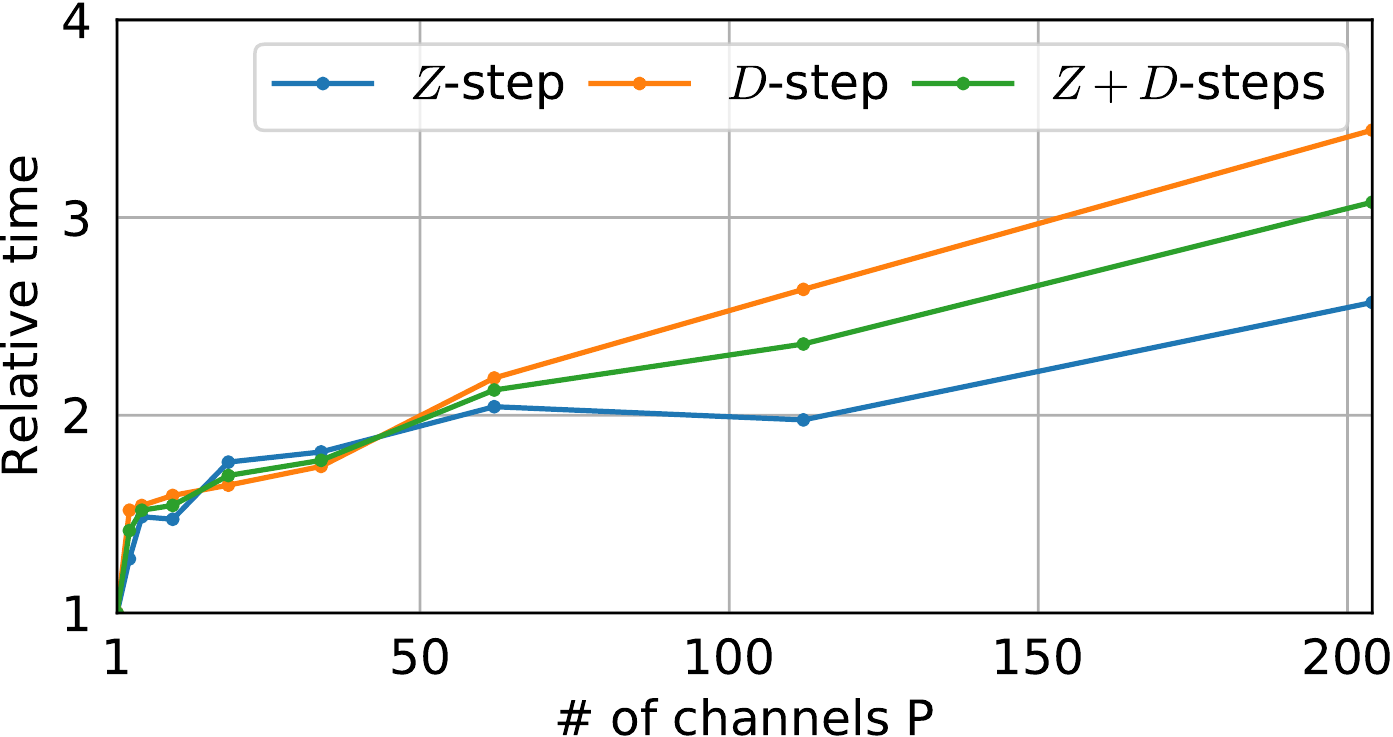}
}
\caption{Timings of $Z$ and $D$ updates when varying the number of channels $P$. The scaling is sublinear with $P$, due to the precomputation steps in the optimization.}

\label{fig:scaling_channels}
\end{figure}

\paragraph{Finding patterns in low SNR signals}
Since the multivariate model has access to more data, we would expect it to perform better compared to the univariate model especially for low SNR signals. To demonstrate this, we compare the two models when varying the number of channels $P$ and the SNR of the data.
The original dictionary contains two patterns, a square and a triangle, presented in \autoref{fig:snr:patterns}. The signals are obtained by convolving the atoms with activation signals $z^n_k$, where the activation locations are sampled uniformly in $\interval{1, \tT}\times\interval{1, K}$ with $5\%$ non-zero activations, and the amplitudes are uniformly sampled in $[0, 1]$. Then, a Gaussian white noise with variance $\sigma$ is added to the signal. We fixed $N=100$, $L=64$ and $\tT=640$ for our simulated signals. We can see in \autoref{fig:snr:patterns} the temporal patterns recovered for $\sigma=10^{-3}$ using only one channel and using 5 channels. While the patterns recovered with one channel are very noisy, the multivariate model with rank-1 constraint recovers the original atoms accurately. This can be expected as the univariate model is ill-defined in this situation, where some atoms are superimposed. For the rank-1 model, as the atoms have different spatial maps, the problem is easier.

Then, we evaluate the learned temporal atoms. Due to permutation and sign ambiguity, we compute the $\ell_2$-norm of the difference between the temporal pattern $\widehat{v}_k$ and the ground truths, $v_k$ or $-v_k$, for all permutations $\mathfrak{S}(K)$ \ie

\begin{equation}
	\text{loss}(\widehat v) = \min_{s \in \mathfrak{S}(K)}\sum_{k=1}^K
			\min\left(\|\widehat{v}_{s(k)} - v_k\|_2^2, \| \widehat{v}_{s(k)} + v_k\|_2^2\right) \enspace .
\end{equation}

Multiple values of $\lambda$ were tested and the best loss is reported in \autoref{fig:snr:channels} for varying noise levels $\sigma$. We observe that independently of the noise level, the multivariate rank-1 model outperforms the univariate one. This is true even for good SNR, as using multiple channels disambiguates the separation of overlapping patterns.

\begin{figure}[tp]
\centering
\hskip-1em%
\subfigure[Patterns recovered with 1 and 5 channels.]{
	\includegraphics[width=.48\textwidth]{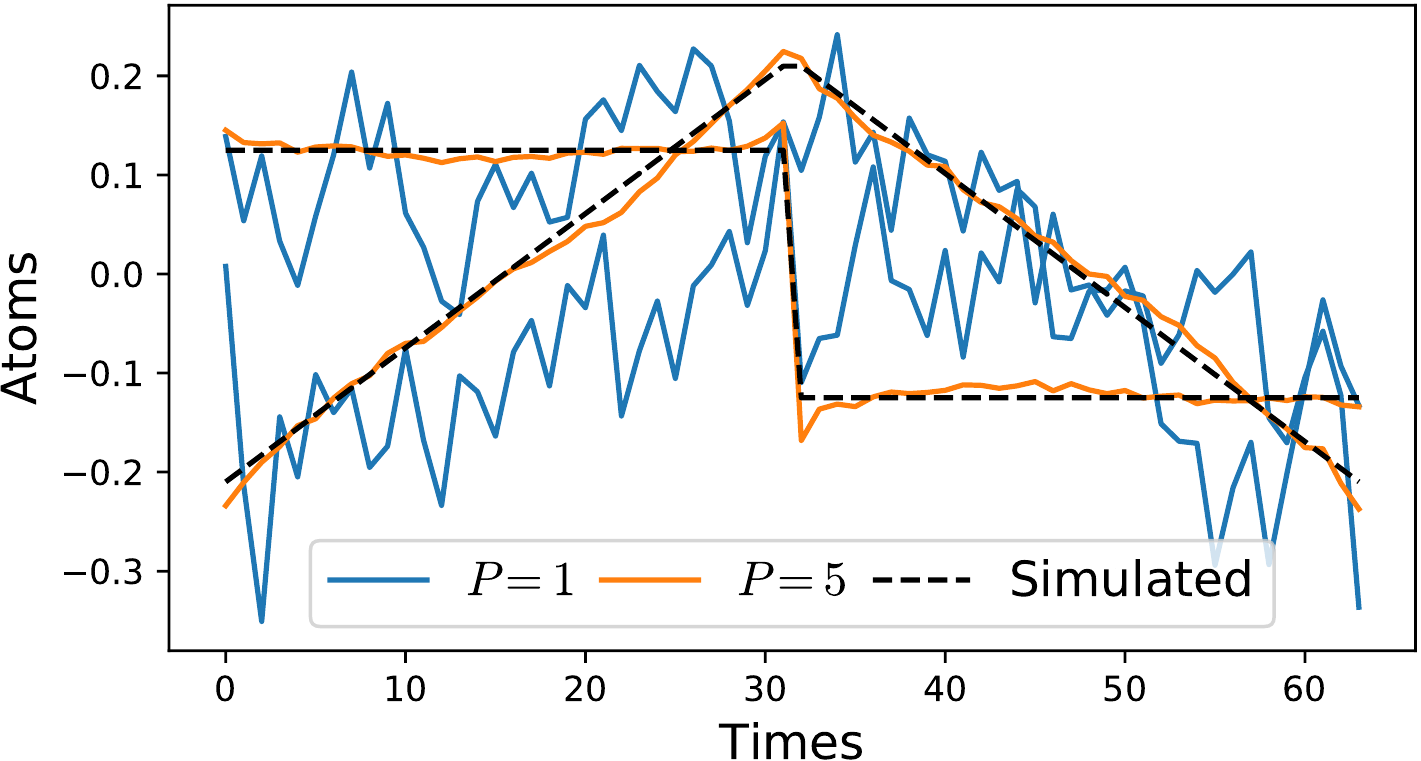}
	\label{fig:snr:patterns}}%
\subfigure[Loss \wrt noise level $\sigma$ and $P$ (lower is better).]{
	\includegraphics[width=.48\textwidth]{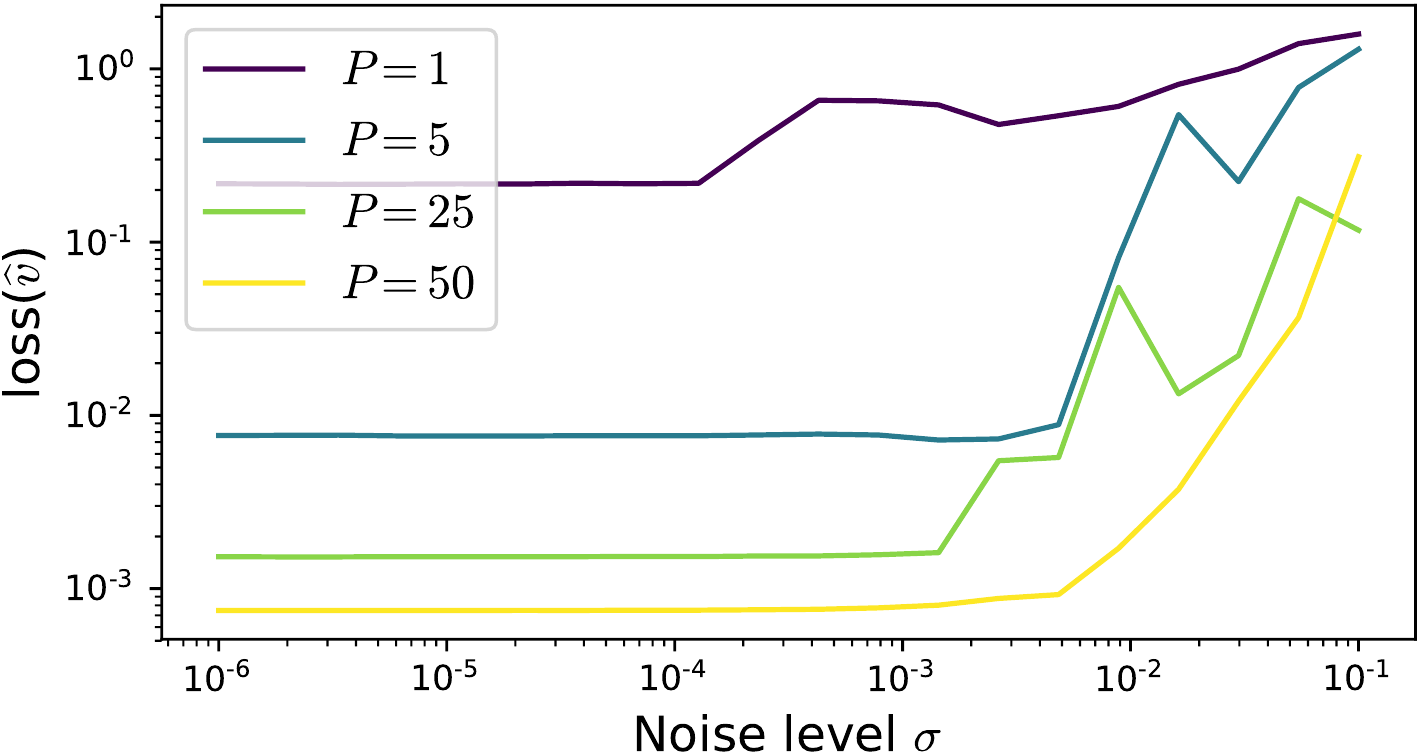}
	\label{fig:snr:channels}}
\caption{(\emph{a}) Patterns recovered with $P = 1$ and $P=5$. The signals were generated with the two simulated temporal patterns and with  $\sigma = 10^{-3}$. (\emph{b}) Evolution of the recovery loss with $\sigma$ for different values of $P$. Using more channels improves the recovery of the original patterns.}
\label{fig:snr}

\end{figure}

\paragraph{Examples of atoms in real MEG signals:}

\begin{figure}[htb]
\centering
\subfigure[Temporal waveform]{
\includegraphics[height=8.3em]{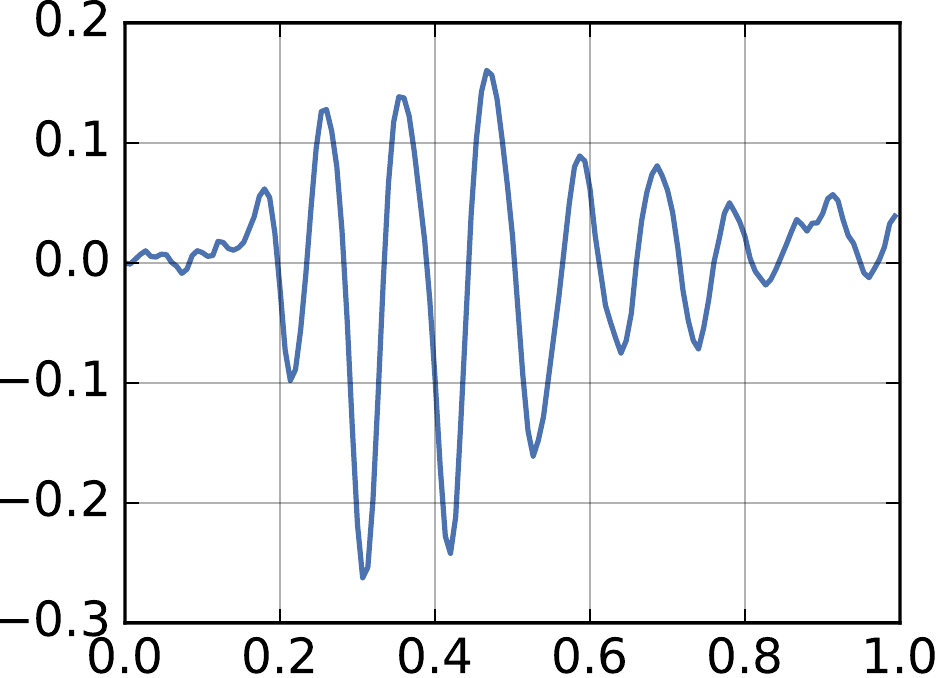}
\label{fig:somato_a}
}%
\subfigure[Spatial pattern]{
\includegraphics[height=8em]{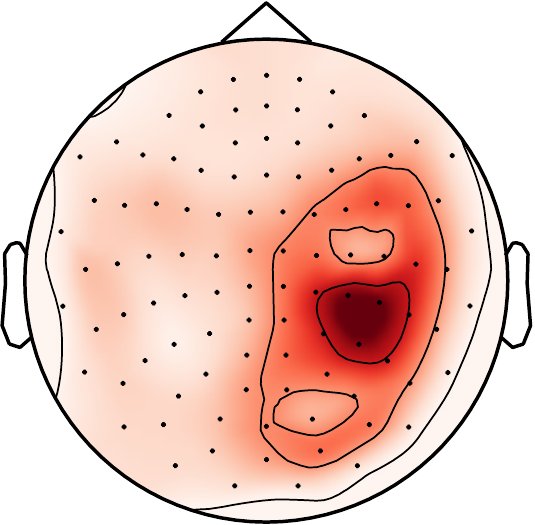}
\label{fig:somato_b}
}%
\subfigure[PSD (dB)]{
\includegraphics[height=8em]{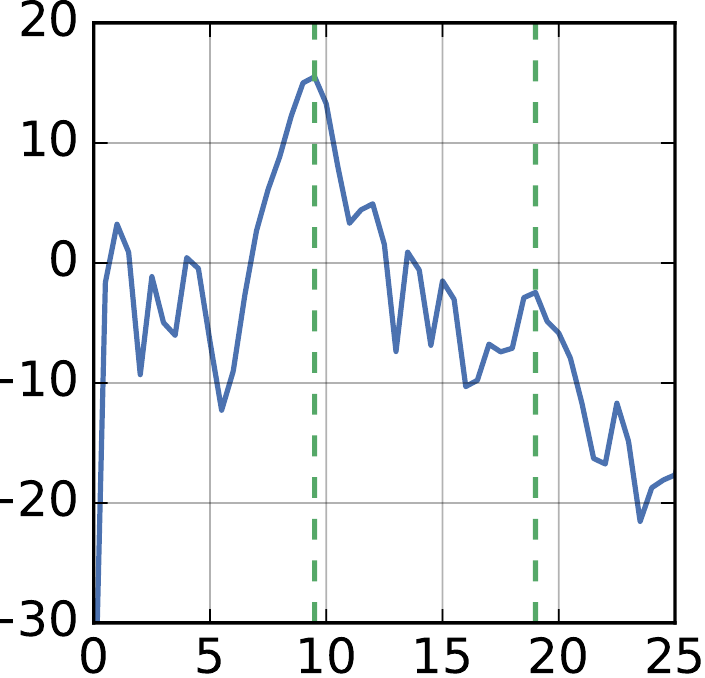}
\label{fig:somato_c}
}%
\subfigure[Dipole fit]{
\includegraphics[height=9em]{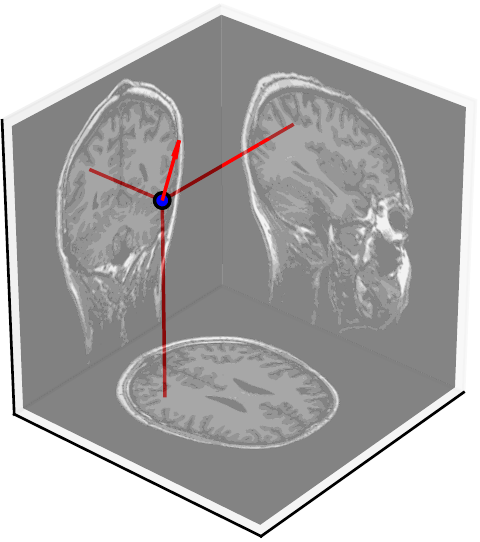}
\label{fig:somato_d}
}

\caption{Atoms revealed using the MNE somatosensory data. Note the non-sinusoidal comb shape of the mu rhythm.}
\label{fig:atoms_somatodata}

\end{figure}

We will now show the results of our algorithm on experimental data, using the MNE somatosensory dataset~\cite{gramfort2013meg,gramfort2014mne}. Here we first extract $N=103$ trials from the data. Each trial lasts 6\,s with a sampling frequency of 150\,Hz ($T=900$). We selected only gradiometer channels, leading to $P=204$ channels.
The signals were notch-filtered to remove the power-line noise, and high-pass filtered at 2\,Hz to remove the low-frequency trend. The purpose of the temporal filtering is to remove low frequency drift artifacts which contribute a lot to the variance of the raw signals.

\autoref{fig:somato_a} shows a recovered non-sinusoidal brain rhythm which resembles the well-known mu-rhythm. The mu-rhythm has been implicated in motor-related activity~\cite{hari2006action} and is centered around 9--11 Hz. Indeed, while the power is concentrated in the same frequency band as the alpha, it has a very different spatial topography (\autoref{fig:somato_b}). In \autoref{fig:somato_c}, the power spectral density (PSD) shows two components of the mu-rhythm -- one at around 9\,Hz., and a harmonic at 18\,Hz as previously reported in~\citep{hari2006action}. Based on our analysis, it is clear that the 18\,Hz component is simply a harmonic of the mu-rhythm even though a Fourier-based analysis could lead us to falsely conclude that the data contained beta-rhythms. Finally, due to the rank-1 nature of our atoms, it is straightforward to fit an equivalent current dipole~\cite{tuomisto1983studies} to interpret the origin of the signal. \autoref{fig:somato_d} shows that the atom does indeed localize in the primary somatosensory cortex, or the so-called S1 region with a 59.3\% goodness of fit. For results on more MEG datasets, see \autoref{sub:supp:brain}. In notably includes mu-shaped atoms from S2.

\section{Conclusion}
\label{sec:ccl}

Many neuroscientific debates today are centered around the morphology of the signals under consideration. For instance, are alpha-rhythms asymmetric~\citep{mazaheri2008asymmetric}? Are frequency specific patterns the result of sustained oscillations or transient bursts~\citep{van2018neural}? In this paper, we presented a multivariate extension to the CSC problem applied to MEG data to help answer such questions. In the original CSC formulation, the signal is expressed as a convolution of atoms and their activations. Our method extends this to the case of multiple channels and imposes a rank-1 constraint on the atoms to account for the instantaneous propagation of electromagnetic fields. We demonstrate the usefulness of our method on publicly available multivariate MEG data. Not only are we able to recover neurologically plausible atoms, but also we are able to find temporal waveforms which are non-sinusoidal. Empirical evaluations show that our solvers are significantly faster compared to existing CSC methods even for the univariate case (single channel). The algorithm scales sublinearly with the number of channels which means it can be employed even for dense sensor arrays with 200-300 sensors, leading to better estimation of the patterns and their origin in the brain. We will release our code online upon publication.

\section*{Acknowledgment}
This work was supported by the ERC Starting Grant SLAB ERC-YStG-676943 and by the ANR THALAMEEG ANR-14-NEUC-0002-01

\bibliographystyle{unsrtnat}
\bibliography{biblio}

\begin{thebibliography}{42}
\providecommand{\natexlab}[1]{#1}
\providecommand{\url}[1]{\texttt{#1}}
\expandafter\ifx\csname urlstyle\endcsname\relax
  \providecommand{\doi}[1]{doi: #1}\else
  \providecommand{\doi}{doi: \begingroup \urlstyle{rm}\Url}\fi

\bibitem[Buzsaki(2006)]{buzsaki2006rhythms}
G.~Buzsaki.
\newblock \emph{Rhythms of the Brain}.
\newblock Oxford University Press, 2006.

\bibitem[van Ede et~al.(2018)van Ede, Quinn, Woolrich, and
  Nobre]{van2018neural}
F.~van Ede, A.~J. Quinn, M.~W. Woolrich, and A.~C. Nobre.
\newblock Neural oscillations: Sustained rhythms or transient burst-events?
\newblock \emph{Trends in Neurosciences}, 2018.

\bibitem[Mazaheri and Jensen(2008)]{mazaheri2008asymmetric}
A.~Mazaheri and O.~Jensen.
\newblock Asymmetric amplitude modulations of brain oscillations generate slow
  evoked responses.
\newblock \emph{{The Journal of Neuroscience}}, 28\penalty0 (31):\penalty0
  7781--7787, 2008.

\bibitem[Cole and Voytek(2017)]{cole2017brain}
S.~R. Cole and B.~Voytek.
\newblock Brain oscillations and the importance of waveform shape.
\newblock \emph{Trends Cogn. Sci.}, 2017.

\bibitem[Jones(2016)]{jones2016brain}
S.~R. Jones.
\newblock When brain rhythms aren't ‘rhythmic’: implication for their
  mechanisms and meaning.
\newblock \emph{Curr. Opin. Neurobiol.}, 40:\penalty0 72--80, 2016.

\bibitem[Cole et~al.(2017)Cole, van~der Meij, Peterson, de~Hemptinne, Starr,
  and Voytek]{Cole4830}
S.~R. Cole, R.~van~der Meij, E.~J. Peterson, C.~de~Hemptinne, P.~A. Starr, and
  B.~Voytek.
\newblock Nonsinusoidal beta oscillations reflect cortical pathophysiology in
  {Parkinson{\textquoteright}s} disease.
\newblock \emph{Journal of Neuroscience}, 37\penalty0 (18):\penalty0
  4830--4840, 2017.

\bibitem[Cole and Voytek(2018)]{cole2018cycle}
S.~R. Cole and B.~Voytek.
\newblock Cycle-by-cycle analysis of neural oscillations.
\newblock \emph{preprint bioRxiv}, 2018.

\bibitem[Hari and Puce(2017)]{hari2017meg}
R.~Hari and A.~Puce.
\newblock \emph{{MEG-EEG Primer}}.
\newblock Oxford University Press, 2017.

\bibitem[Jost et~al.(2006)Jost, Vandergheynst, Lesage, and
  Gribonval]{jost2006motif}
P.~Jost, P.~Vandergheynst, S.~Lesage, and R.~Gribonval.
\newblock {MoTIF: an efficient algorithm for learning translation invariant
  dictionaries}.
\newblock In \emph{{Acoustics, Speech and Signal Processing (ICASSP)}},
  volume~5. IEEE, 2006.

\bibitem[Gips et~al.(2017)Gips, Bahramisharif, Lowet, Roberts, de~Weerd,
  Jensen, and van~der Eerden]{gips2017discovering}
B.~Gips, A.~Bahramisharif, E.~Lowet, M.~Roberts, P.~de~Weerd, O.~Jensen, and
  J.~van~der Eerden.
\newblock Discovering recurring patterns in electrophysiological recordings.
\newblock \emph{{J. Neurosci. Methods}}, 275:\penalty0 66--79, 2017.

\bibitem[Hitziger et~al.(2017)Hitziger, Clerc, Saillet, Benar, and
  Papadopoulo]{hitziger2017adaptive}
S.~Hitziger, M.~Clerc, S.~Saillet, C.~Benar, and T.~Papadopoulo.
\newblock {Adaptive Waveform Learning: A Framework for Modeling Variability in
  Neurophysiological Signals}.
\newblock \emph{IEEE Transactions on Signal Processing}, 2017.

\bibitem[Brockmeier and Pr{\'\i}ncipe(2016)]{brockmeier2016learning}
A.~J. Brockmeier and J.~C. Pr{\'\i}ncipe.
\newblock Learning recurrent waveforms within {EEGs}.
\newblock \emph{IEEE Transactions on Biomedical Engineering}, 63\penalty0
  (1):\penalty0 43--54, 2016.

\bibitem[Jas et~al.(2017)Jas, {Dupr{\'{e}} La Tour}, Şimşekli, and
  Gramfort]{Jas2017}
M.~Jas, T.~{Dupr{\'{e}} La Tour}, U.~Şimşekli, and A.~Gramfort.
\newblock Learning the morphology of brain signals using alpha-stable
  convolutional sparse coding.
\newblock In \emph{Advances in Neural Information Processing Systems (NIPS)},
  pages 1--15, 2017.

\bibitem[Pachitariu et~al.(2013)Pachitariu, Packer, Pettit, Dalgleish, Hausser,
  and Sahani]{pachitariu2013extracting}
M.~Pachitariu, A.~M Packer, N.~Pettit, H.~Dalgleish, M.~Hausser, and M.~Sahani.
\newblock Extracting regions of interest from biological images with
  convolutional sparse block coding.
\newblock In \emph{Advances in Neural Information Processing Systems (NIPS)},
  pages 1745--1753, 2013.

\bibitem[Kavukcuoglu et~al.(2010)Kavukcuoglu, Sermanet, Boureau, Gregor,
  Mathieu, and Le~Cun]{kavukcuoglu2010learning}
K.~Kavukcuoglu, P.~Sermanet, Y-L. Boureau, K.~Gregor, M.~Mathieu, and
  Y.~Le~Cun.
\newblock Learning convolutional feature hierarchies for visual recognition.
\newblock In \emph{Advances in Neural Information Processing Systems (NIPS)},
  pages 1090--1098, 2010.

\bibitem[Zeiler et~al.(2010)Zeiler, Krishnan, Taylor, and
  Fergus]{zeiler2010deconvolutional}
M.~D. Zeiler, D.~Krishnan, G.W. Taylor, and R.~Fergus.
\newblock Deconvolutional networks.
\newblock In \emph{{Computer Vision and Pattern Recognition (CVPR)}}, pages
  2528--2535. IEEE, 2010.

\bibitem[Heide et~al.(2015)Heide, Heidrich, and Wetzstein]{heide2015fast}
F.~Heide, W.~Heidrich, and G.~Wetzstein.
\newblock Fast and flexible convolutional sparse coding.
\newblock In \emph{{Computer Vision and Pattern Recognition (CVPR)}}, pages
  5135--5143. IEEE, 2015.

\bibitem[Wohlberg(2016{\natexlab{a}})]{wohlberg2016efficient}
B.~Wohlberg.
\newblock {Efficient algorithms for convolutional sparse representations}.
\newblock \emph{{Image Processing, IEEE Transactions on}}, 25\penalty0
  (1):\penalty0 301--315, 2016{\natexlab{a}}.

\bibitem[{\v{S}}orel and {\v{S}}roubek(2016)]{vsorel2016fast}
M.~{\v{S}}orel and F.~{\v{S}}roubek.
\newblock Fast convolutional sparse coding using matrix inversion lemma.
\newblock \emph{{Digital Signal Processing}}, 2016.

\bibitem[Grosse et~al.(2007)Grosse, Raina, Kwong, and Ng]{Grosse-etal:2007}
R.~Grosse, R.~Raina, H.~Kwong, and A.~Y. Ng.
\newblock Shift-invariant sparse coding for audio classification.
\newblock In \emph{23rd Conference on Uncertainty in Artificial Intelligence
  (UAI)}, pages 149--158. AUAI Press, 2007.
\newblock ISBN 0-9749039-3-0.

\bibitem[Mailh{\'e} et~al.(2008)Mailh{\'e}, Lesage, Gribonval, Bimbot, and
  Vandergheynst]{mailhe2008shift}
B.~Mailh{\'e}, S.~Lesage, R.~Gribonval, F.~Bimbot, and P.~Vandergheynst.
\newblock Shift-invariant dictionary learning for sparse representations:
  extending {K-SVD}.
\newblock In \emph{16th Eur. Signal Process. Conf.}, pages 1--5. IEEE, 2008.

\bibitem[Wohlberg(2016{\natexlab{b}})]{wohlberg2016convolutional}
B.~Wohlberg.
\newblock Convolutional sparse representation of color images.
\newblock In \emph{IEEE Southwest Symposium on Image Analysis and
  Interpretation (SSIAI)}, pages 57--60, 2016{\natexlab{b}}.

\bibitem[Barth{\'e}lemy et~al.(2012)Barth{\'e}lemy, Larue, Mayoue, Mercier, and
  Mars]{barthelemy2012shift}
Q.~Barth{\'e}lemy, A.~Larue, A.~Mayoue, D.~Mercier, and J.~I. Mars.
\newblock Shift \& 2d rotation invariant sparse coding for multivariate
  signals.
\newblock \emph{IEEE Transactions on Signal Processing}, 60\penalty0
  (4):\penalty0 1597--1611, 2012.

\bibitem[Barth{\'e}lemy et~al.(2013)Barth{\'e}lemy, Gouy-Pailler, Isaac,
  Souloumiac, Larue, and Mars]{barthelemy2013multivariate}
Q.~Barth{\'e}lemy, C.~Gouy-Pailler, Y.~Isaac, A.~Souloumiac, A.~Larue, and
  J.~I. Mars.
\newblock {Multivariate temporal dictionary learning for {EEG}}.
\newblock \emph{{J. Neurosci. Methods}}, 215\penalty0 (1):\penalty0 19--28,
  2013.

\bibitem[Moreau et~al.(2018)Moreau, Oudre, and Vayatis]{Moreau2018a}
T.~Moreau, L.~Oudre, and N.~Vayatis.
\newblock {DICOD: Distributed Convolutional Sparse Coding}.
\newblock In \emph{International Conference on Machine Learning (ICML)}, 2018.

\bibitem[Chalasani et~al.(2013)Chalasani, Principe, and
  Ramakrishnan]{Chalasani2013}
R.~Chalasani, J.~C. Principe, and N.~Ramakrishnan.
\newblock {A fast proximal method for convolutional sparse coding}.
\newblock In \emph{International Joint Conference on Neural Networks (IJCNN)},
  pages 1--5, 2013.
\newblock ISBN 9781467361293.

\bibitem[Beck and Teboulle(2009)]{beck2009fast}
A.~Beck and M.~Teboulle.
\newblock A fast iterative shrinkage-thresholding algorithm for linear inverse
  problems.
\newblock \emph{SIAM journal on imaging sciences}, 2\penalty0 (1):\penalty0
  183--202, 2009.

\bibitem[Bristow et~al.(2013)Bristow, Eriksson, and Lucey]{bristow2013fast}
H.~Bristow, A.~Eriksson, and S.~Lucey.
\newblock Fast convolutional sparse coding.
\newblock In \emph{{Computer Vision and Pattern Recognition (CVPR)}}, pages
  391--398, 2013.

\bibitem[Boyd et~al.(2011)Boyd, Parikh, Chu, Peleato, Eckstein,
  et~al.]{boyd2011distributed}
S.~Boyd, N.~Parikh, E.~Chu, B.~Peleato, J.~Eckstein, et~al.
\newblock Distributed optimization and statistical learning via the alternating
  direction method of multipliers.
\newblock \emph{Foundations and Trends{\textregistered} in Machine Learning},
  3\penalty0 (1):\penalty0 1--122, 2011.

\bibitem[Byrd et~al.(1995)Byrd, Lu, Nocedal, and Zhu]{byrd1995limited}
R.~H. Byrd, P.~Lu, J.~Nocedal, and C.~Zhu.
\newblock A limited memory algorithm for bound constrained optimization.
\newblock \emph{SIAM Journal on Scientific Computing}, 16\penalty0
  (5):\penalty0 1190--1208, 1995.

\bibitem[Friedman et~al.(2007)Friedman, Hastie, H{\"{o}}fling, and
  Tibshirani]{Friedman2007}
J.~Friedman, T.~Hastie, H.~H{\"{o}}fling, and R.~Tibshirani.
\newblock {Pathwise coordinate optimization}.
\newblock \emph{The Annals of Applied Statistics}, 1\penalty0 (2):\penalty0
  302--332, 2007.

\bibitem[Nesterov(2010)]{Nesterov2010}
Y.~Nesterov.
\newblock {Efficiency of coordinate descent methods on huge-scale optimization
  problems}.
\newblock \emph{SIAM Journal on Optimization}, 22\penalty0 (2):\penalty0
  341--362, 2010.

\bibitem[Richt{\'a}rik and Tak{\'a}{\v{c}}(2014)]{richtarik2014iteration}
P.~Richt{\'a}rik and M.~Tak{\'a}{\v{c}}.
\newblock Iteration complexity of randomized block-coordinate descent methods
  for minimizing a composite function.
\newblock \emph{Mathematical Programming}, 144\penalty0 (1-2):\penalty0 1--38,
  2014.

\bibitem[Osher and Li(2009)]{Osher2009}
S.~Osher and Y.~Li.
\newblock {Coordinate descent optimization for $\ell_1$ minimization with
  application to compressed sensing; a greedy algorithm}.
\newblock \emph{Inverse Problems and Imaging}, 3\penalty0 (3):\penalty0
  487--503, 2009.

\bibitem[Nutini et~al.(2015)Nutini, Schmidt, Laradji, Friedlander, and
  Koepke]{Nutini2015}
J.~Nutini, M.~Schmidt, I.~H. Laradji, M.~P. Friedlander, and H.~Koepke.
\newblock {Coordinate Descent Converges Faster with the Gauss-Southwell Rule
  Than Random Selection}.
\newblock In \emph{International Conference on Machine Learning (ICML)}, pages
  1632--1641, 2015.

\bibitem[Wright and Nocedal(1999)]{wright1999numerical}
S.~Wright and J.~Nocedal.
\newblock \emph{Numerical optimization}, volume~35.
\newblock Springer Science, 1999.

\bibitem[Hastie et~al.(2015)Hastie, Tibshirani, and Wainwright]{Hastie2015}
T.~Hastie, R.~Tibshirani, and M.~J. Wainwright.
\newblock \emph{{Statistical Learning with Sparsity}}.
\newblock CRC Press, 2015.

\bibitem[Gramfort et~al.(2013)Gramfort, Luessi, Larson, Engemann, Strohmeier,
  Brodbeck, Goj, Jas, Brooks, Parkkonen, et~al.]{gramfort2013meg}
A.~Gramfort, M.~Luessi, E.~Larson, D.~A. Engemann, D.~Strohmeier, C.~Brodbeck,
  R.~Goj, M.~Jas, T.~Brooks, L.~Parkkonen, et~al.
\newblock {MEG and EEG data analysis with MNE-Python}.
\newblock \emph{Frontiers in neuroscience}, 7, 2013.

\bibitem[Gramfort et~al.(2014)Gramfort, Luessi, Larson, Engemann, Strohmeier,
  Brodbeck, Parkkonen, and H{\"a}m{\"a}l{\"a}inen]{gramfort2014mne}
A.~Gramfort, M.~Luessi, E.~Larson, D.~A. Engemann, D.~Strohmeier, C.~Brodbeck,
  L.~Parkkonen, and M.~S. H{\"a}m{\"a}l{\"a}inen.
\newblock {MNE software for processing MEG and EEG data}.
\newblock \emph{Neuroimage}, 86:\penalty0 446--460, 2014.

\bibitem[Garcia-Cardona and Wohlberg(2017)]{garcia2017convolutional}
C.~Garcia-Cardona and B.~Wohlberg.
\newblock Convolutional dictionary learning.
\newblock \emph{arXiv preprint arXiv:1709.02893}, 2017.

\bibitem[Hari(2006)]{hari2006action}
R.~Hari.
\newblock Action--perception connection and the cortical mu rhythm.
\newblock \emph{Progress in brain research}, 159:\penalty0 253--260, 2006.

\bibitem[Tuomisto et~al.(1983)Tuomisto, Hari, Katila, Poutanen, and
  Varpula]{tuomisto1983studies}
T.~Tuomisto, R.~Hari, T.~Katila, T.~Poutanen, and T.~Varpula.
\newblock Studies of auditory evoked magnetic and electric responses: Modality
  specificity and modelling.
\newblock \emph{Il Nuovo Cimento D}, 2\penalty0 (2):\penalty0 471--483, 1983.

\end{thebibliography}

\newpage
\appendix
\numberwithin{figure}{section}
\numberwithin{equation}{section}
\numberwithin{algocf}{section}

\makeatletter
\renewcommand{\p@subfigure}{\thefigure}
\makeatother

\def\subfigureautorefname~#1\null{Figure #1\null}

\section{Optimization details}

In this section, we give more details about the optimization procedures used to speed-up both $D$-step and $Z$-step.

\subsection{Details on the $D$-step optimization}
\label{sub:supp:dstep}

First, let's recall the objective function, as introduced in \autoref{sec:d_step}:

\begin{align}
 E(\{u_k\}_k, \{v_k\}_k) \overset{\Delta}{=} \sum_{n=1}^N\frac{1}{2}\|X^n - \sum_{k=1}^K z^n_k * (u_k^{ }  v_k^\top) \|_{2}^{2},
\end{align}
which we optimize under the constraints $\|u_{k}\|_2^2 \leq 1$ and $\|v_{k}\|_2^2 \leq 1$.

To compute the gradient of $E(\{u_k\}_k, \{v_k\}_k)$ relatively to a full atom $ D_k = u_k^{ } v_k^\top \in \bbR^{P \times L}$, we introduce some constants $\Phi_k$ and $\Psi_{k,l}$, which are constant during the entire $D$-step:

\begin{align}
\nabla_{D_{k}} E(\{u_k\}_k, \{v_k\}_k) =  \sum_{n=1}^N \tran{(z_k^n)} * \left(X^n - \sum_{l=1}^K z^n_l * D_l\right)
=  \Phi_k - \sum_{l=1}^K \Psi_{k, l} *  D_l
\end{align}

Indeed, we have:
\begin{align}
\nabla_{D_{k}} E(\{u_k\}_k, \{v_k\}_k)[t] &=  \sum_{n=1}^N \left(\tran{(z_k^n)} * \left(X^n - \sum_{l=1}^K z^n_l * D_l\right)\right) [t]\\
&= \sum_{n=1}^N \sum_{\tau=1}^{\tT} z_k^n[\tau] \left(X^n - \sum_{l=1}^K z^n_l * D_l\right) [t + \tau - 1]\\
&= \sum_{n=1}^N \sum_{\tau=1}^{\tT} z_k^n[\tau] \left(X^n[t + \tau - 1] - \sum_{l=1}^K \sum_{\tau'=1}^{L} z^n_l[\tau'] D_l[t + \tau - \tau']\right)\\
&= \Phi_k[t] - \sum_{l=1}^K \sum_{\tau'=1}^{L} \left(\sum_{n=1}^N \sum_{\tau=1}^{\tT}  z_k^n[\tau] z^n_l[t + \tau - \tau']\right) D_l[\tau']\\
&= \Phi_k[t] - \sum_{l=1}^K \sum_{\tau'=1}^{L} \Psi_{k,l}[t + 1 - \tau'] D_l[\tau']\\
&= \Phi_k[t] - \sum_{l=1}^K (\Psi_{k,l} * D_l) [t]
\end{align}

where $\Phi_k \in \bbR^{P \times L}$ are computed with:

\begin{align}
\label{eq:phi}
\Phi_k[t] =  \sum_{n=1}^N \sum_{\tau=1}^{\tT} z_k^n[\tau] X^n[t + \tau - 1], \;\;\;\;\forall t \in \interval{1, L},
\end{align}

and where $\Psi_{k,l} \in \bbR^{2L - 1}$ are computed with:

\begin{align}
\label{eq:psi}
\Psi_{k,l}[t]  = \sum_{n=1}^N \sum_{\tau=1}^{\tT} z_k^n[\tau] z_l^n[t + \tau - 1] , \;\;\;\;\forall t \in \interval{1, 2L-1}.
\end{align}

Note that in the last equation \eqref{eq:psi}, the sum only concerns the defined terms, \ie $(t+\tau-1)\in\interval{1, \tT}$. 
The computational complexities of $\Phi_k$ and $\Psi_{k,l}$ are respectively $\bO{NLTKP}$ and $\bO{NLTK^2}$.

Then, the gradients relative to $u_k$ and $v_k$ are obtained using the chain rule,
\begin{align}
	\nabla_{u_k} E(\{u_k\}_k, \{v_k\}_k) &=  \nabla_{D_k}E(\{u_k\}_k, \{v_k\}_k) v_k ~~ \in \bbR^{P}~,
	\label{eq:supp:grad_u}\\
	\nabla_{v_k} E(\{u_k\}_k, \{v_k\}_k) &=  u_k^\top \nabla_{D_k} E(\{u_k\}_k, \{v_k\}_k)  ~~ \in \bbR^L~,
	\label{eq:supp:grad_v}
\end{align}
and $E(\{u_k\}_k, \{v_k\}_k)$ can be computed, up to a constant term $C$ , with the following
\begin{equation}
	E(\{u_k\}_k, \{v_k\}_k) = \sum_{k=1}^K u_k^\top \nabla_{D_k} E(\{u_k\}_k, \{v_k\}_k) v_k + C~.
\end{equation}

\autoref{alg:supp:d_update} details the different step used in our algorithm to update $\{u_k\}$ and $\{v_k\}$.  

\begin{algorithm}[t]
	\SetAlgoLined
	\SetKwInOut{Input}{Input}
	\SetKwFunction{Armijo}{Armijo}
	\Input{Signals $X^n$, activations $z_k^n$,
		   stopping parameter $\epsilon >  0$,\\
		   initial estimate $\{u_k\}$ and $\{v_k\}$ }
	Initialize $\Phi_k$ with \autoref{eq:phi} and $\Psi_k$ with \autoref{eq:psi} . \\ 
	\Repeat{$\sum_{k=1}^K\left\|u^{(q+1)}_k - u^{(q)}_k\right\|_1 < \epsilon$ }{
		Compute with \autoref{eq:supp:grad_u} for $k \in \interval{1, K}$,
		~$G_k = \nabla_{u_k}E(\{u_k^{(q)}\}_k, \{v_k\}_k),$\\
		Update the estimate with $\{u_k^{(q+1)}\} \leftarrow$ \KwTo  \Armijo{$\{u_k^{(q)}\}, G_k, E$}
	}
	Set $\{u_k\} \leftarrow \{u_k^{(q)}\}$\\ 
	\Repeat{$\sum_{k=1}^K\left\|v^{(q+1)}_k - v^{(q)}_k\right\|_1 < \epsilon$ }{
		Compute with \autoref{eq:supp:grad_v} for $k \in \interval{1, K}$,
		~$G_k = \nabla_{v_k}E(\{u_k\}_k, \{v_k^{(q)}\}_k),$\\
		Update the estimate with $\{v_k^{(q+1)}\} \leftarrow$ \KwTo  \Armijo{$\{v_k^{(q)}\}, G_k, E$}
	}
	Set $\{v_k\} \leftarrow \{ v_k^{(q)}\}$\\ 
	\Return $\{u_k\}_k$ and $\{v_k\}_k$  
    \caption{Projected gradient descent for updating $\{u_k\}$ and $\{v_k\}$.}
	\label{alg:supp:d_update}
\end{algorithm}

\subsection{Details on the $Z$-step optimization}

\subsubsection{The coordinate update}
\label{ssub:supp:cd_up}

 \begin{proposition}
 	The optimal update $z'_{k_0}[t_0]$ of the coefficient $(k_0, t_0)$ is given by
 	\[
 	z'_{k_0}[t_0] = \frac{1}{\| D_{k_0}\|_2^2} \max\left(\beta_{k_0}[t_0] - \lambda, 0\right) ~,
 	\]
 	with $\beta_{k_0}[t_0] = \tran{D_{k_0}} \pconv \left(X- \sum_{k=1}^K z_k * D_k +z_{k_0}[t_0]e_{t_0}* D_{k_0}\right)[t_0]$ and where $e_{t_0}$ is the canonical vector in $\bbR^{\tT}$ with value 1 in $t_0$ and value 0 elsewhere.

 \label{prop:pb_coord}
 \end{proposition}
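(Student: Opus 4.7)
The plan is to reduce the problem to a one-dimensional convex optimization in the single scalar $\tilde z := z_{k_0}[t_0]$, keeping all other coordinates of $(z_k)_{k,t}$ fixed, and then solve the resulting soft-thresholding problem in closed form under the nonnegativity constraint.

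First I would rewrite the reconstruction residual by isolating the contribution of the coordinate $(k_0, t_0)$. Writing $z_{k_0} = (z_{k_0} - z_{k_0}[t_0]\, e_{t_0}) + z_{k_0}[t_0]\, e_{t_0}$, and replacing the second summand by the candidate $\tilde z\, e_{t_0}$, the candidate residual becomes $Y - \tilde z \,(e_{t_0} * D_{k_0})$, where
\begin{equation*}
Y \;=\; X - \sum_{k=1}^K z_k * D_k + z_{k_0}[t_0]\, e_{t_0} * D_{k_0}
\end{equation*}
is the current residual with the contribution of the single coordinate removed. All other terms of the $\ell_1$ penalty are constant in $\tilde z$, so up to an additive constant the univariate objective is
\begin{equation*}
f(\tilde z) \;=\; \tfrac{1}{2}\|Y - \tilde z\,(e_{t_0}*D_{k_0})\|_2^2 + \lambda\,\tilde z, \qquad \tilde z \ge 0.
\end{equation*}

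Next I would expand the quadratic. The shift invariance of the time-wise $\ell_2$ norm (convolving a multivariate atom with $e_{t_0}$ merely translates it) gives $\|e_{t_0}*D_{k_0}\|_2^2 = \|D_{k_0}\|_2^2$, so the quadratic coefficient is $\tfrac{1}{2}\|D_{k_0}\|_2^2$. The linear coefficient is $-\langle Y, e_{t_0}*D_{k_0}\rangle$. The key identification is that this cross term is exactly $\beta_{k_0}[t_0]$: using the definitions of $*$ and $\pconv$ from the notation paragraph, a direct expansion shows
\begin{equation*}
\langle Y,\, e_{t_0} * D_{k_0} \rangle \;=\; \sum_{p=1}^P \sum_{\tau=1}^L D_{k_0,p}[\tau]\, Y_p[t_0 + \tau - 1] \;=\; \bigl[\tran{D_{k_0}} \pconv Y\bigr][t_0],
\end{equation*}
which matches the definition of $\beta_{k_0}[t_0]$ in the statement (with $Y$ being the bracketed expression there). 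This is the step I expect to be the main obstacle: carefully unpacking the reversal-then-channel-wise-convolution operator $\pconv$ and verifying that correlation of $Y$ against a shifted atom gives precisely the indicated component. A brief check on indices should suffice.

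Finally I would solve the 1D minimization. The function $f$ is a nonnegative quadratic in $\tilde z$ with derivative $f'(\tilde z) = \|D_{k_0}\|_2^2\,\tilde z - \beta_{k_0}[t_0] + \lambda$. The unconstrained minimizer is $(\beta_{k_0}[t_0]-\lambda)/\|D_{k_0}\|_2^2$. Applying the KKT conditions for the constraint $\tilde z \ge 0$ yields the stated closed form
\begin{equation*}
z'_{k_0}[t_0] \;=\; \frac{1}{\|D_{k_0}\|_2^2}\,\max\!\bigl(\beta_{k_0}[t_0] - \lambda,\, 0\bigr),
\end{equation*}
which is the desired soft-thresholding update. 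Strong convexity of $f$ (since $\|D_{k_0}\|_2>0$, which can be assumed without loss of generality for any active atom) guarantees uniqueness.
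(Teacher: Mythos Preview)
Your proposal is correct and follows essentially the same approach as the paper: both isolate the single coordinate by defining the partial residual $Y$ (the paper calls it $\alpha_{k_0}$), expand the resulting one-dimensional quadratic in the candidate value, identify the linear coefficient with $\beta_{k_0}[t_0]$ via the correlation/$\pconv$ computation, and conclude with the nonnegative soft-thresholding formula. The only cosmetic difference is that the paper works with the cost \emph{difference} $e_{k_0,t_0}(y)$ and maximizes it, whereas you minimize the restricted objective $f(\tilde z)$ directly and invoke KKT; the algebra and the key identification are the same.
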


 \begin{proof}
 For $y\in\bbR^+$, we will denote $e_{k_0, t_0}(y)$ the cost difference between our current solution estimate $z_k$ and the signal $z^{(1)}_k$ where the coefficient $z_{k_0}[t_0]$ has been replaced by $y$, \ie 
 \[
 z^{(1)}_{k}[t] =  \begin{cases}
 		    y, &\text{ if } (k, t) = (k_0, t_0)\\
            z_{k}[t], &\text{ elsewhere }\\
 		\end{cases}~.
 \]
 Let $\alpha_{k_0}[t] = (X - \sum_{k=1}^Kz_k*D_k)[t] + D_{k_0}[t-t_0] z_{k_0}[t_0]$ for all $t \in \interval{0, T-1}$. This quantity denotes the residual when $z_{k_0}[t_0]$ is set to 0. It is important to note that it can be re-written as,
\[
	\alpha_{k}[t] = \left(X- \sum_{k=1}^K z_k * D_k +z_{k_0}[t_0]e_{t_0}* D_{k_0}\right)[t]
\]

and thus, $\beta_{k_0}[t_0] = \left(\tran{D_{k_0}} \pconv \alpha_{k_0}\right)[t_0]$. 
The cost difference $e_{k_0, t_0}(y)$ is, 
 	\begin{align*}
	   e_{k_0, t_0}(y) 
	   & = \frac{1}{2}\sum_{t=0}^{T-1} \left(X - \sum_{k=1}^Kz_{k}*D_{k}\right)^2[t]
	   				+ \lambda \sum_{k=1}^K\|z_{k}\|_1
	   	- \frac{1}{2}\sum_{t=0}^{T-1} \left(X - \sum_{k=1}^K z^{(1)}_{k}*D_{k}\right)^2[t]
	   				+ \lambda \sum_{k=1}^K\|z^{(1)}_{k}\|_1 \\
	   & = \frac{1}{2}\sum_{t=0}^{T-1}\left(\alpha_{k_0}[t] - D_{k_0}[t-t_0] z_{k_0}[t_0]\right)^2
	   	- \frac{1}{2}\sum_{t=0}^{T-1} \left(\alpha_{k_0}[t] - D_{k_0}[t-t_0] y \right)^2
	   	+ \lambda (|z_{k_0}[t_0]| - |y|)\\
	   & = \frac{1}{2}\sum_{t=0}^{T-1} D_{k_0}[t-t_0]^2(z_{k_0}[t_0]^2 - y^2)
	    - \sum_{t=0}^{T-1}\alpha_{k_0}[t]D_{k_0}[t-t_0] (z_{k_0}[t_0] - y)
	   	+ \lambda (|z_{k_0}[t_0]| - |y|)\\
	   & = \frac{\|D_{k_0}\|_2^2}{2} (z_{k_0}[t_0]^2 - y^2)
	    - \underbrace{(\tran{D_{k_0}} \pconv \alpha_{k_0})[t_0]}_{\beta_{k_0}[t_0]}(z_{k_0}[t_0] - y)
	    + \lambda (|z_{k_0}[t_0]| - |y|)
	\end{align*}%
	Using this result, we can derive the optimal value $z'_{k_0}[t_0]$ to update the coefficient $(k_0, t_0)$ as the solution of the following optimization problem:
\begin{equation}
	\label{eqA:pb_coord}
	z'_{k_0}[t_0] = \arg\max_{y\in \bbR^+} e_{k_0, t_0}(y) \sim \arg\min_{u\in \bbR^+}\frac{\| D_{k_0}\|_2^2}{2}\left(y - \frac{\beta_{k_0}[t_0]}{\|D_{k_0}\|_2^2}\right)^2 + \lambda y~.\\
\end{equation}
Simple computations show the desired result, \ie
\[
	z'_{k_0}[t_0] = \frac{1}{\|D_{k_0}\|_2^2}\max(\beta_{k_0}[t_0] - \lambda, 0)
\].
\end{proof}

\subsubsection{The $\beta$ update}
\label{ssub:supp:beta_up}

 \begin{proposition}
 	When updating the coefficient $z_{k_0}[t_0]$ to the value $z'_{k_0}[t_0]$, $\beta$ is updated with:
\begin{equation}\label{eq:supp:beta_up}
	\beta_k^{(q+1)}[t] = \beta_k^{(q)}[t] +
		( \tran{D_{k_0}} \pconv   D_k)[t-t_0] (z_{k_0}[t_0] - z'_{k_0}[t_0]), \hskip2em \forall (k, t) \neq (k_0, t_0)~.
\end{equation}
 \label{prop:pb_coord}
 \end{proposition}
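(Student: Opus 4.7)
The plan is to chase the single-coordinate change $z_{k_0}[t_0] \mapsto z'_{k_0}[t_0]$ through the closed-form expression for $\beta_k[t]$ given in equation \eqref{eq:optimal_update} and simply read off the increment. Writing
\[
\beta_k[t] = \bigl(\tran{D_k} \pconv \alpha_k\bigr)[t], \qquad
\alpha_k := X - \sum_{l=1}^K z_l * D_l + z_k[t]\,e_t * D_k,
\]
we see that $\beta_k[t]$ is an affine function of the current iterate $z$, so the update of $\beta_k^{(q)}[t]$ reduces entirely to computing the change $\Delta \alpha_k := \alpha_k^{(q+1)} - \alpha_k^{(q)}$ and then pushing it through the linear operator $\tran{D_k} \pconv (\cdot)$.

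First I would observe that for any target pair $(k,t) \neq (k_0, t_0)$, the add-back term $z_k[t]\,e_t * D_k$ in the definition of $\alpha_k$ involves the coefficient $z_k[t]$, which is \emph{not} touched by the update, so the only change in $\alpha_k$ comes from the change in the full residual $X - \sum_l z_l * D_l$. A direct substitution then gives
\[
\Delta \alpha_k = \bigl(z_{k_0}[t_0] - z'_{k_0}[t_0]\bigr)\,e_{t_0} * D_{k_0}.
\]
Applying $\tran{D_k} \pconv (\cdot)$ to both sides and using linearity yields
\[
\beta_k^{(q+1)}[t] - \beta_k^{(q)}[t]
= \bigl(z_{k_0}[t_0] - z'_{k_0}[t_0]\bigr)\,\bigl(\tran{D_k} \pconv (e_{t_0} * D_{k_0})\bigr)[t].
\]
The last step is to simplify the right-hand factor into $(\tran{D_{k_0}} \pconv D_k)[t - t_0]$ by channel-wise associativity of convolution with the shift $e_{t_0} *$: for each channel $p$, $\tran{D_{k,p}} * (e_{t_0} * D_{k_0,p})$ is exactly the shift by $t_0$ of $\tran{D_{k,p}} * D_{k_0,p}$, so summing over $p$ and recognizing the result as $\tran{D_{k_0}} \pconv D_k$ evaluated at lag $t - t_0$ closes the argument.

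The main technical obstacle is precisely this last identity, which mixes a time-reversal $\tran{(\cdot)}$, a per-channel convolution $*$, a translation $e_{t_0} *$, and the channel sum $\pconv$: one must be careful with the paper's index convention to verify both that the shift lands at $t - t_0$ (rather than, say, $2L - t + t_0$) and that after reindexing the reversal ends up on $D_{k_0}$ as claimed rather than on $D_k$ --- this uses the commutativity $D_{k,p} * D_{k_0,p} = D_{k_0,p} * D_{k,p}$ of scalar convolution together with a relabelling of the reversal. Two consistency checks confirm the formula: $\tran{D_{k_0}} \pconv D_k \in \bbR^{2L-1}$, so the update is supported on $|t - t_0| < L$, matching the $K(2L-1)$ entry count quoted earlier in the paper; and the statement's exclusion of the pair $(k,t) = (k_0, t_0)$ is consistent with the observation that at that very coordinate the add-back term in $\alpha_{k_0}$ itself absorbs the change in $z_{k_0}[t_0]$, so that $\alpha_{k_0}$, and hence $\beta_{k_0}[t_0]$, are in fact unchanged by the update and the general formula does not apply there.
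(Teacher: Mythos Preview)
Your proposal is correct and follows essentially the same route as the paper's proof: both arguments expand the definition of $\beta_k[t]$, observe that for $(k,t)\neq(k_0,t_0)$ the add-back term $z_k[t]\,e_t*D_k$ is untouched so the only change comes from the residual $X-\sum_l z_l*D_l$, which shifts by $(z_{k_0}[t_0]-z'_{k_0}[t_0])\,e_{t_0}*D_{k_0}$, then push this increment through the linear operator $\tran{D_k}\pconv(\cdot)$ and simplify the shifted cross-correlation to $(\tran{D_{k_0}}\pconv D_k)[t-t_0]$. Your treatment of the index-convention and reversal-placement subtlety in that last simplification is more explicit than the paper's (which simply asserts the identity), but the underlying argument is identical.
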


\begin{proof}
The value of $\beta_{k_0}[t_0]$ is independent of the value of $z_{k_0}[t_0]$. Indeed, the term $z_{k_0}[t_0]e_{t_0} * D_{k_0}$ cancel the contribution of $z_{k_0}[t_0]$ in the convolution $z_{k_0}*D_{k_0}$. Thus, when updating the value of the coefficient $z_{k_0}[t_0]$, $\beta_{k_0}[t_0]$ is not updated.

We denote $z_k^{(q+1)}$ the activation signal where the coefficient $z_{k_0}[t_0]$ as been updated to $z'_{k_0}[t_0]$, \ie,

\[
	z^{(q+1)}_{k}[t] =  \begin{cases}
 		    z'_{k_0}[t_0], &\text{ if } (k, t) = (k_0, t_0)\\
            z_{k}[t], &\text{ elsewhere }\\
 		\end{cases}~.
\]

For $(k, t) \neq (k_0, t_0)$,    

\begin{align*}
	\beta^{(q+1)}_{k}[t]
		& = \left[\tran{D_{k}} \pconv  \left(X- \sum_{l=1}^K z^{(1)}_l * D_l
								 + z_{k}[t]e_{t}* D_{k}\right)\right][t]\\
		& = \left[\tran{D_{k}} \pconv  \left(X- \sum_{l=1}^K z_l * D_l
								 + z_{k}[t]e_{t}* D_{k}
								 + (z_{k_0}[t_0] - z'_{k_0}[t_0] ) e_{t_0}* D_{k} \right)\right][t]\\
		& = \left[\tran{D_{k}} \pconv  \left(X- \sum_{l=1}^K z_l * D_l
								 + z_{k}[t]e_{t}* D_{k}\right)\right][t]
			+ \left[\tran{D_{k}} \pconv  \left((z_{k_0}[t_0] - z'_{k_0}[t_0] ) e_{t_0}* D_{k} \right)\right][t]\\
		& = \beta^{(q)}_{k}[t] 
			+ (z_{k_0}[t_0] - z'_{k_0}[t_0] )\left[\tran{D_{k}} \pconv  \left( e_{t_0}* D_{k} \right)\right][t]\\
		& = \beta^{(q)}_{k}[t] 
			+ (\tran{D_{k}} \pconv  D_{k})[t-t_0](z_{k_0}[t_0] - z'_{k_0}[t_0] )\\
\end{align*}
With this relation, it is possible to keep $\beta_k$ up to date with few operation after each coordinate update.
\end{proof}

\subsubsection{Precomputation for $\tran{D}_k\pconv D_l$}
\label{ssub:supp:DtD}

Similarly to the $D$-step precomputations, we can precompute $\tran{D_{k}} \pconv  D_l \in \bbR^{2L - 1}$ to speed up the LGCD iterations during the $Z$-step. We have:

\begin{align}
(\tran{D_{k}} \pconv  D_l)[t]  = \sum_{p=1}^P \sum_{\tau=1}^{L} D_{k,p}[\tau] D_{l,p}[t + \tau - 1] , \;\;\;\;\forall t \in \interval{1, 2L-1}.
\end{align}

In the case of the rank-1 constraint model, we can factorize the computation with:

\begin{align}
(\tran{D_{k}} \pconv  D_l)[t]  = \left(\sum_{p=1}^{P} u_{k, p} u_{l, p}\right) \sum_{\tau=1}^{L} v_{k}[\tau] v_{l}[t + \tau - 1] , \;\;\;\;\forall t \in \interval{1, 2L-1}.
\end{align}

The computational complexities are respectively $\bO{K^2L^2P}$ and  $\bO{K^2L(L+P)}$.

\section{Additional Experiments}

\subsection{Speed performance}

We present here more benchmarks as described in \autoref{par:fast_optim}, yet with different settings.

First we used shorter atoms of length $L=16$ instead of $L=128$, and results are presented in \autoref{fig:speed_supp_short_atoms}.
They confirm the competitiveness of our method, especially when using large regularization parameters. On these problems, the maximum possible regularization $\lambda_{max}$ was around 90.

\begin{figure} 
\begin{center}

\subfigure[Shorter atoms, univariate.]{
\includegraphics[width=0.66\textwidth]{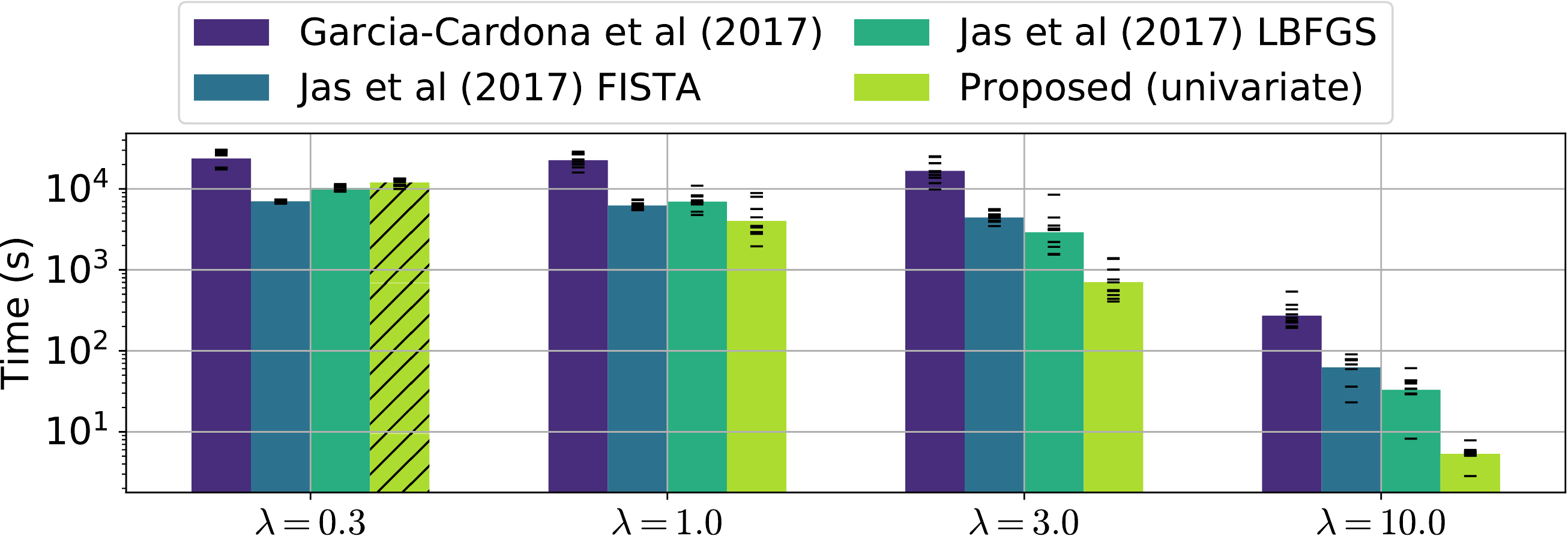}}\\

\subfigure[Shorter atoms, multivariate ($P=5$).]{
\label{fig:speed_supp_short_atoms_multi}
\includegraphics[width=0.66\textwidth]{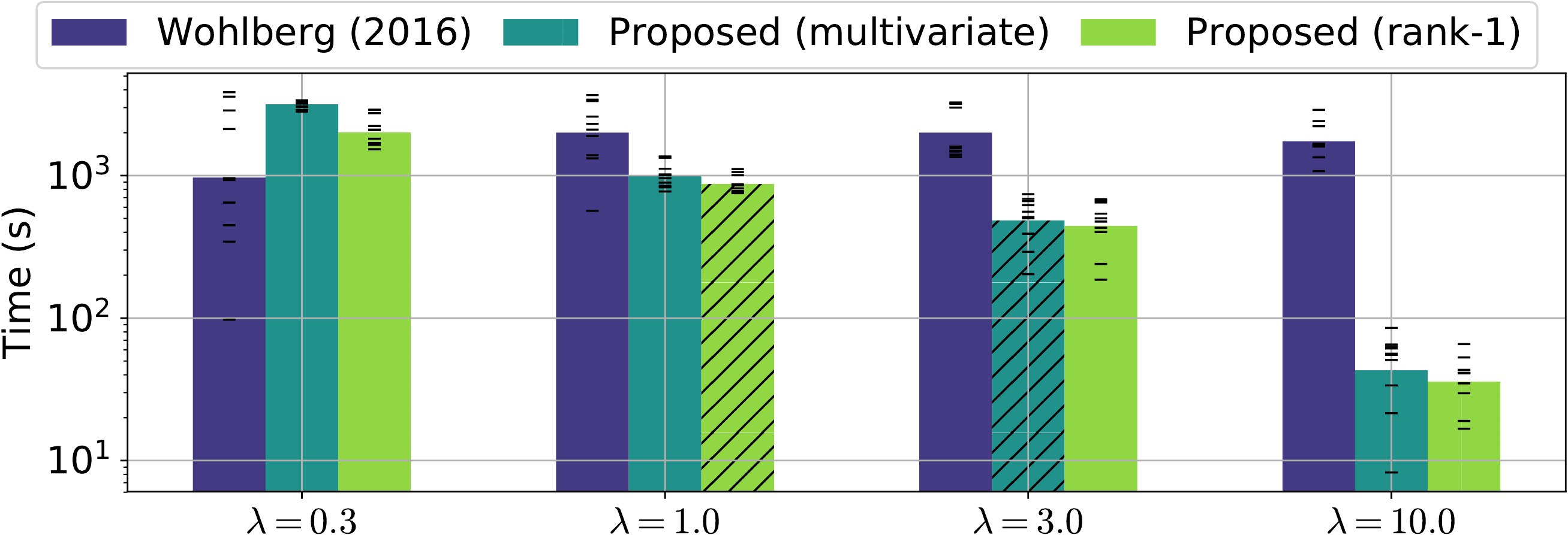}}\end{center}

\caption{Comparison of state-of-the-art methods with our approach. Here we used shorter ($L=16$ instead of $L=128$) atoms.}
\label{fig:speed_supp_short_atoms}
\end{figure}

Then, we used shorter signals of length $T=13~470$ instead of $T=134~700$ and results are presented in \autoref{fig:speed_supp_short_asignals}.
They also confirm the competitiveness of our method, except with small regularization parameters. However, as the maximum possible regularization $\lambda_{max}$ was around 90, we question the practical use of these low values, which would poorly enforce the sparsity constraint.

\begin{figure} 
\begin{center}
\subfigure[Shorter signals, univariate.]{
\includegraphics[width=0.66\textwidth]{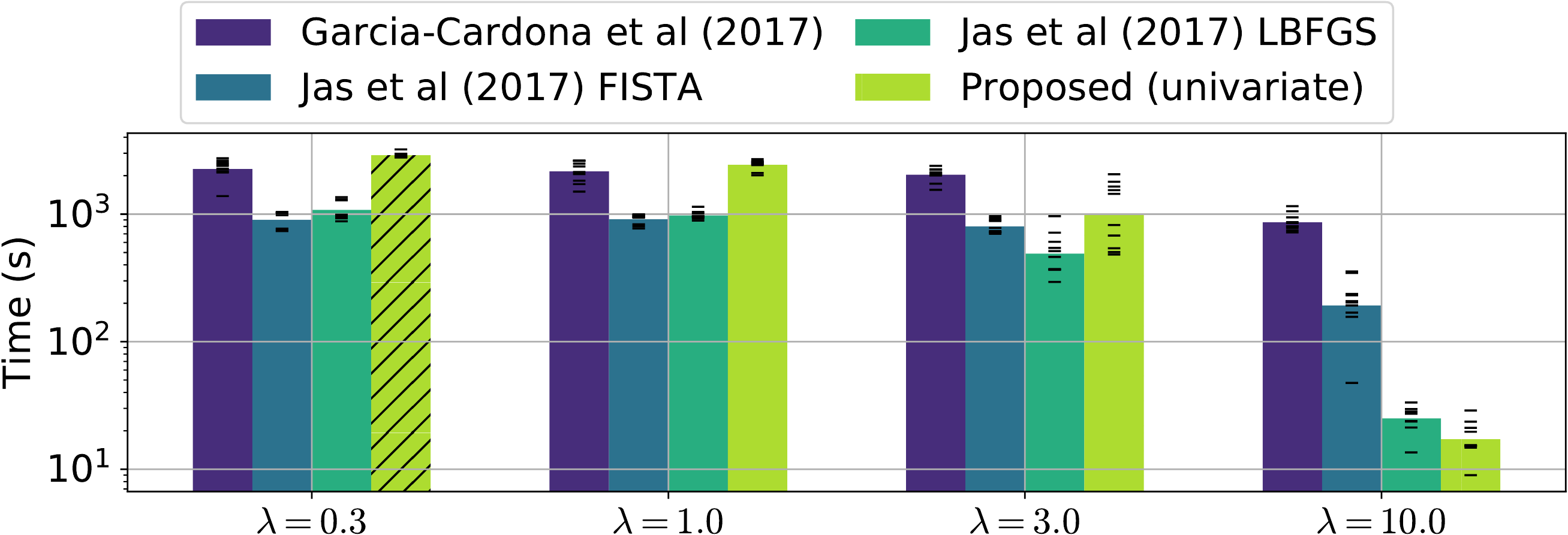}}
\subfigure[Shorter signals, multivariate ($P=5$).]{
\includegraphics[width=0.66\textwidth]{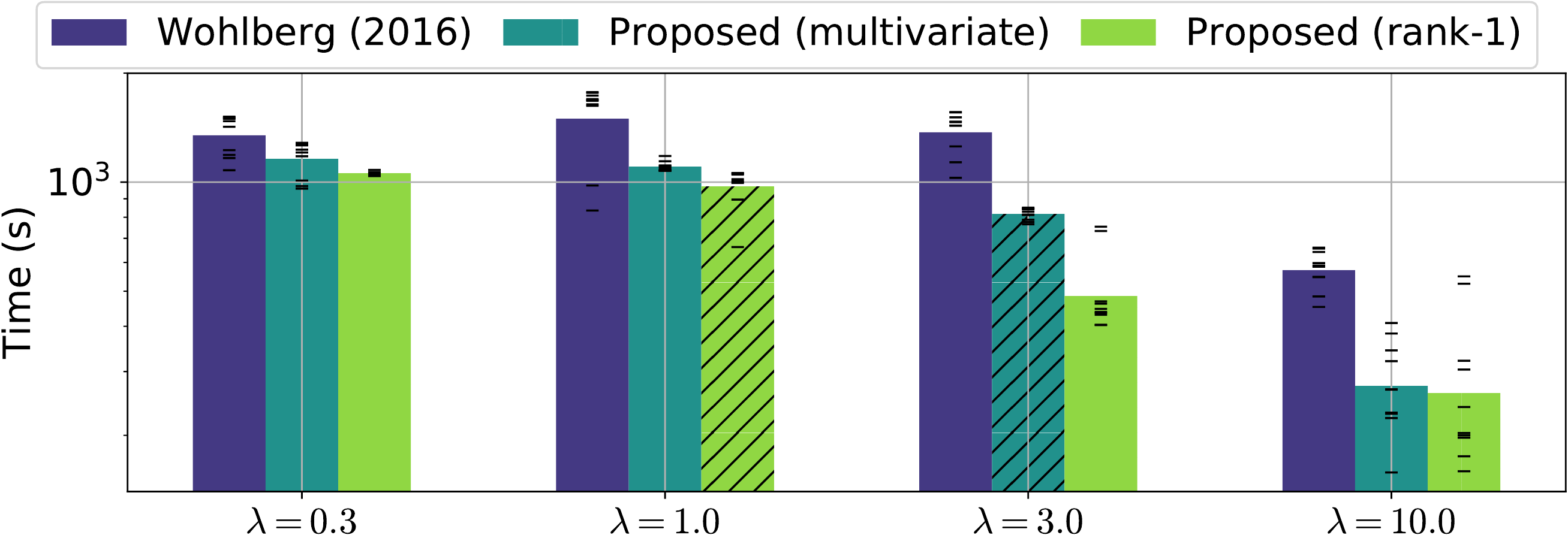}}
\end{center}

\caption{Comparison of state-of-the-art methods with our approach. Here we used shorter signals ($T=13~470$ instead of $T=134~700$).}
\label{fig:speed_supp_short_asignals}
\end{figure}

\subsection{Somatosensory dataset}
\label{sub:supp:brain}

In \autoref{fig:somato_d}, we showed mu-shaped atoms in the \emph{primary} somatosensory region for the MNE somatosensory dataset. Intriguingly, we also find such atoms in the \emph{secondary} somatosensory region, also known as S2. One such atom is shown in \autoref{fig:atoms_somato_S2}.

\begin{figure}[htb]
\begin{center}
\includegraphics[width=0.99\linewidth]{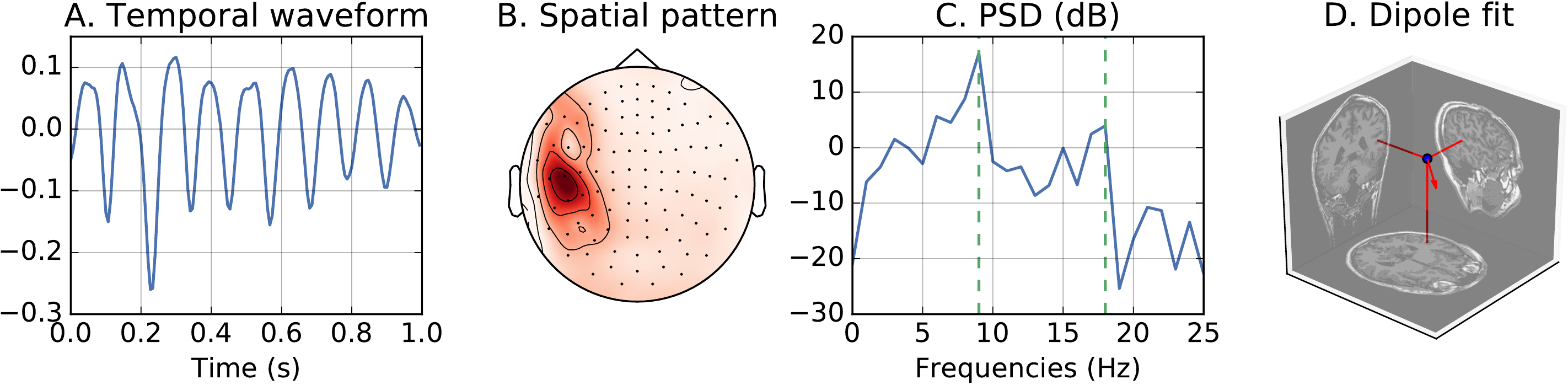}
\caption{Atom in the S2 region revealed in the MNE somatosensory data. A. The temporal waveform, and its corresponding B. Spatial pattern, C. The Power Spectral Density (PSD), and D. the dipole fit in the S2 region.}
\label{fig:atoms_somato_S2}
\end{center}
\end{figure}

\subsection{Sample dataset}

In addition to the MNE somatosensory dataset, we also analyzed the MNE sample dataset~\cite{gramfort2013meg,gramfort2014mne}. In this case, we used $N=1$, and the number of time points $T=41584$ corresponds to 278\,s of recording sampled at 150.15\,Hz. The magnetometer channels are selected so that the number of channels $P=102$. We learn $K=25$ atoms. The sample data is lowpass filtered at 40\,Hz, and highpass filtered at 1\,Hz. 

In \autoref{fig:atoms_sampledata}, we show the atoms learned on the MNE sample data. \autoref{fig:atoms_sampledata}.A shows the temporal waveforms of these atoms and \autoref{fig:atoms_sampledata}.C shows the corresponding spatial pattern for a selection of the total atoms. As expected, we are able to recover latent components corresponding to ocular (3rd row) and cardiac artifacts (4th row). Indeed, the ocular artifacts displays the prototypical dipolar pattern in the frontal channels. In \autoref{fig:atoms_sampledata}.B, we also show the sparse activations associated with the atoms. 

More interestingly, we also recover an oscillatory waveform (first row) which appears to originate due to a dipole below the parietal channels at around a frequency of 30\,Hz. We confirm this in \autoref{fig:dipoles_and_psd} using a dipole fit. Indeed, the atom does originate in the parietal lobe which suggests that what we observe is probably a motor rhythm. The dataset under consideration did in fact contain a button press task which could explain the presence of such an atom.

\begin{figure}[htb]
\begin{center}
\includegraphics[width=0.99\linewidth]{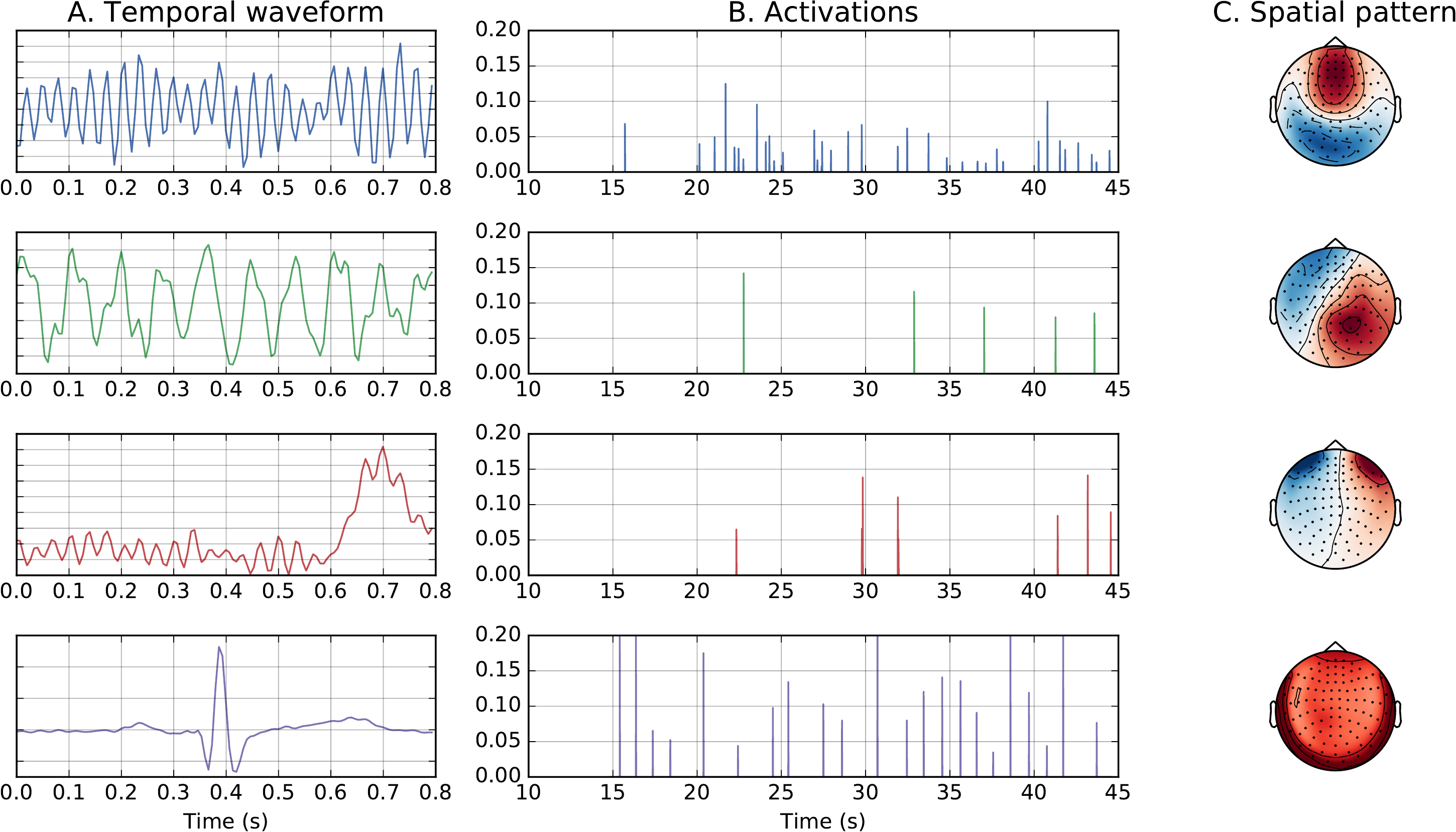}
\caption{A selection of A. temporal waveforms of the atoms learned on the MNE sample dataset, and their corresponding B. activations, and C. spatial patterns}
\label{fig:atoms_sampledata}
\end{center}
\end{figure}

\begin{figure}[htb]
\begin{center}
\includegraphics[width=0.8\linewidth]{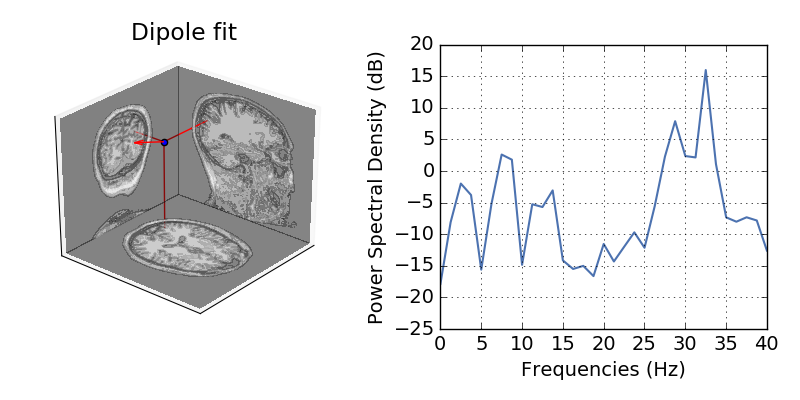}
\caption{Dipole fit and power spectral density computed on MNE sample dataset for the atom in first row in Figure~\autoref{fig:atoms_sampledata}.}
\label{fig:dipoles_and_psd}
\end{center}
\end{figure}

\end{document}